\documentclass[aps,pra,reprint,superscriptaddress,longbibliography,nofootinbib]{revtex4-2}

\usepackage{amsmath,amssymb,mathtools,amsthm,bm,graphicx,booktabs,array}
\usepackage[nopatch=footnote]{microtype}
\usepackage[T1]{fontenc}
\usepackage{lmodern}
\usepackage{braket}

\newcommand{\id}{\operatorname{id}}
\newcommand{\tp}{\mathsf T}
\newcommand{\Tr}{\operatorname{Tr}}
\newcommand{\rank}{\operatorname{rank}}
\newcommand{\supp}{\operatorname{supp}}
\newcommand{\spec}{\operatorname{spec}}
\newcommand{\SR}{\operatorname{SR}}
\newcommand{\Sym}{\operatorname{Sym}}

\newcommand{\diag}{\operatorname{diag}}
\newcommand{\sgn}{\operatorname{sgn}}
\newcommand{\desc}{\downarrow}
\newcommand{\cD}{\mathcal D}
\newcommand{\cE}{\mathcal E}
\newcommand{\cF}{\mathcal F}
\newcommand{\cH}{\mathcal H}
\newcommand{\cN}{\mathcal N}
\newcommand{\cS}{\mathcal S}

\newcommand{\cC}{\mathcal C}

\newcommand{\bbC}{\mathbb C}
\newcommand{\norm}[1]{\left\lVert #1\right\rVert}
\newcommand{\abs}[1]{\left\lvert #1\right\rvert}
\newcommand{\dyad}[2]{\ket{#1}\!\bra{#2}}
\newcommand{\proj}[1]{\ket{#1}\!\bra{#1}}

\newtheorem{theorem}{Theorem}
\newtheorem{proposition}[theorem]{Proposition}
\newtheorem{lemma}[theorem]{Lemma}
\newtheorem{corollary}[theorem]{Corollary}
\newtheorem{definition}[theorem]{Definition}

\graphicspath{{figures/}}

\begin{document}

\title{The Inverse Eigenvalue Problem for Partial Transposes of Two-Qubit States}

\author{Ruoting Dou}
\affiliation{National Laboratory of Solid State Microstructures and School of Physics, Collaborative Innovation Center of Advanced Microstructures, Nanjing University, Nanjing 210093, China}

\author{Shengjun Wu}
\email{sjwu@nju.edu.cn}
\affiliation{National Laboratory of Solid State Microstructures and School of Physics, Collaborative Innovation Center of Advanced Microstructures, Nanjing University, Nanjing 210093, China}

\author{Zeng-Bing Chen}
\email{zbchen@nju.edu.cn}
\affiliation{National Laboratory of Solid State Microstructures and School of Physics, Collaborative Innovation Center of Advanced Microstructures, Nanjing University, Nanjing 210093, China}

\begin{abstract}
For a bipartite state $\rho$, information about the spectrum of its partial transpose $\rho^{\Gamma_B}$ can be inferred from measurements on multiple copies of $\rho$, without full state tomography. This raises a natural question: which eigenvalue lists can arise as $\operatorname{spec}(\rho^{\Gamma_B})$ for a density operator $\rho$? We completely solve this inverse eigenvalue problem for two qubits. Every nonnegative trace-one spectrum is realized as $\operatorname{spec}(\rho^{\Gamma_B})$ by some PPT state $\rho$, whereas an ordered candidate eigenvalue list $(x,y,z,-q)$, with $x\ge y\ge z\ge0$, $q>0$, and $x+y+z-q=1$, is realized by an NPT state iff $q\le y$ and $qy\le xz$. Sufficiency in the latter case is established by an explicit $X$ state whose quantum steering ellipsoid has center $c=(y-q)/(1-z)$ and normalized volume $V/V_{\max}(c)=qy/(xz)$, providing a geometric interpretation of the inequalities $q\le y$ and $qy\le xz$ as the allowed ellipsoid-center region and the fixed-center volume bound. Beyond this geometric picture, the two-qubit inverse theorem also yields exact negativity bounds from the two lowest nontrivial PT moments. Given fixed values of $p_2=\Tr[(\rho^{\Gamma_B})^2]$ and $p_3=\Tr[(\rho^{\Gamma_B})^3]$, we determine the exact minimum and maximum negativity over all two-qubit states subject to these moment constraints. When no PPT state is consistent with the pair $(p_2,p_3)$, the minimum is attained either at $x=y$ or $qy=xz$, while the maximum is attained either at $y=z$ or $q=y$. Finally, we show how the two-qubit inequalities persist as necessary constraints for the inverse eigenvalue problem in qubit--qudit systems.
\end{abstract}

\maketitle

\section{Introduction}
\label{sec:intro}

Detecting and quantifying entanglement are central tasks in quantum information. Partial transposition provides one of the simplest spectral tests for bipartite entanglement. If $\rho_{AB}$ is separable, then $\rho_{AB}^{\Gamma_B}\succeq0$; the converse holds for $2\otimes2$ and $2\otimes3$ systems \cite{Peres1996,Horodecki1996}. The negative eigenvalues of $\rho^{\Gamma_B}$ define the negativity
\begin{equation}
 \cN(\rho)=\frac{\norm{\rho^{\Gamma_B}}_1-1}{2},
 \label{eq:negativity-intro}
\end{equation}
which is a standard mixed-state entanglement monotone, as is the logarithmic negativity
\cite{VidalWerner2002,Plenio2005,HorodeckiRMP2009}.
The full spectrum of $\rho^{\Gamma_B}$ determines the partial-transpose (PT) moments
\begin{equation}
 p_k=\Tr[(\rho^{\Gamma_B})^k].
 \label{eq:pt-moments-intro}
\end{equation}
Conversely, for two qubits $p_2,p_3,p_4$ already uniquely determine $\operatorname{spec}(\rho^{\Gamma_B})$, since
$p_1=\Tr(\rho^{\Gamma_B})=1$, and therefore determine the negativity as
well \cite{Bartkiewicz2015}.
The PT moments can be estimated without full state tomography via quantum circuits on multiple copies, local randomized measurements, or sequential qubit-reuse protocols with bounded active memory \cite{Carteret2005,Elben2020,Neven2021,HuangEtAl2026}. Experimentally obtaining $p_2,p_3,p_4$ in a two-qubit system therefore provides a route to reconstructing $\operatorname{spec}(\rho^{\Gamma_B})$ and testing separability through the PPT criterion.

Although complete knowledge of $\operatorname{spec}(\rho^{\Gamma_B})$ suffices to test separability for two qubits, its inverse eigenvalue problem remains nontrivial: given a trace-one eigenvalue list $\boldsymbol{\lambda}$, does there exist a density operator $\rho_{AB}$ such that
\begin{equation}
\boldsymbol{\lambda}
=
\operatorname{spec}^{\downarrow}\!\left(\rho_{AB}^{\Gamma_B}\right)?
\end{equation}
Related inverse problems include the inverse eigenvalue problem for block-positive operators \cite{JohnstonPatterson2018} and absolute-PPT theory \cite{Hildebrand2007}. However, neither these results nor known constraints on the eigenvalues of $\rho^{\Gamma_B}$ settle the question above.

We solve the inverse problem completely for two qubits. Every nonnegative trace-one spectrum is realized as $\operatorname{spec}(\rho^{\Gamma_B})$ by a diagonal PPT state. Since the partial transpose of a two-qubit state has at most one negative eigenvalue \cite{Sanpera1998,Rana2013}, any ordered candidate eigenvalue list with a negative entry can be written as
\begin{equation}
 \boldsymbol{\lambda}=(x,y,z,-q),\qquad
 x\ge y\ge z\ge0,\quad q>0,
 \label{eq:NPT-candidate-intro}
\end{equation}
with $x+y+z-q=1$. Our main result is
\begin{equation}
\boldsymbol{\lambda}\ \text{is realizable}\quad\Longleftrightarrow\quad
q\le y,\qquad qy\le xz.
\label{eq:main-iff-intro}
\end{equation}
Necessity follows from a two-copy analysis based on the local symmetric--antisymmetric decomposition of the two-copy Hilbert space, while sufficiency is established by an explicit $X$ state. For comparison, the corresponding two-qubit block-positive theorem requires only $q\le y$ and $q^2\le xz$ \cite[Theorem~3]{JohnstonPatterson2018}; requiring the operator to be the partial transpose of a density operator tightens the second bound to $qy\le xz$.

While two-qubit geometry has been explored from several perspectives \cite{Milne2015Witnesses,Zhang2019ExperimentalQSE,Xu2024EllipsoidZoo,Gamel2016Bloch,Morelli2024BlochGeometry}, the $X$ state used to establish sufficiency reveals a direct connection between the inverse-eigenvalue inequalities and quantum steering-ellipsoid geometry \cite{Jevtic2014QSE,Milne2014ExtremalQSE}. Its steering ellipsoid is an axisymmetric spheroid tangent to the Bloch sphere, with center $c=(y-q)/(1-z)$ and normalized volume $V/V_{\max}(c)=qy/(xz)$. Accordingly, $q\le y$ and $qy\le xz$ correspond to the allowed ellipsoid-center region and the fixed-center volume bound, while after canonical filtering the latter is also the condition for complete positivity of the associated amplitude-damping-plus-dephasing channel.

The inverse theorem also yields exact negativity bounds from the two lowest nontrivial PT moments, $p_2$ and $p_3$. Partial-transpose moments have been used for entanglement detection \cite{Elben2020,Neven2021,YuImaiGuhne2021,MillerEisert2026} and negativity estimation \cite{Carteret2017,TarabungaHaug2026}. Two-qubit PT-moment regions have been characterized in Refs.~\cite{Zhang2022,Zhang2023}, and moment-based entanglement phase diagrams were studied in Ref.~\cite{Carrasco2024}. For two qubits, $p_2,p_3,p_4$ already fix $\operatorname{spec}(\rho^{\Gamma_B})$, whereas the two lowest alone do not determine the negativity. We derive analytically the complete negativity interval over all two-qubit states consistent with given values of these two moments. If the moment pair excludes all PPT states, the lower endpoint lies on either $x=y$ or $qy=xz$, whereas the upper endpoint lies on either $y=z$ or $q=y$.

We finally extend the two-qubit inverse constraints to $2\otimes d$ systems. Each negative eigenvector of $\rho^{\Gamma_B}$ has a two-dimensional Schmidt support on subsystem $B$; projecting onto this support yields a positive, generally unnormalized two-qubit operator whose partial transpose retains the corresponding negative eigenvalue; after normalization, the resulting state obeys the two-qubit inverse inequalities. At the two-copy level, the same local-swap projection method extends to $2\otimes d$ with an exact correction term, while the joint compatibility of different two-qubit projections remains open.

Section~\ref{sec:two-copy} develops the two-copy constraint, Sec.~\ref{sec:two-qubit-qse} gives the complete two-qubit characterization and steering geometry, Sec.~\ref{sec:moments} treats the two-moment negativity problem, and Sec.~\ref{sec:qubit-qudit} extends the analysis to $2\otimes d$. Further details are in the Supplemental Material.

\section{The inverse eigenvalue problem and the two-copy structure}
\label{sec:two-copy}

\subsection{Inverse eigenvalue formulation}

Let $A\simeq\mathbb C^m$ and $B\simeq\mathbb C^n$, with $\cD(A\otimes B)$ denoting the density operators on $A\otimes B$. We write $\Gamma_B=\id_A\otimes\tp_B$ for partial transposition on $B$ in a fixed local basis. Its spectrum is basis independent, and since $\rho^{\Gamma_A}=(\rho^{\Gamma_B})^{\tp}$, either subsystem gives the same PT spectrum; we therefore omit the subsystem label when unambiguous.

\begin{definition}[Partial-transpose inverse-eigenvalue set]
For ordered eigenvalue lists, define
\begin{equation}
 \cS_{\rm PT}^{m,n}
 :=
 \left\{
 \spec^{\desc}(\rho^{\Gamma_B}):
 \rho\in\cD(\mathbb C^m\otimes\mathbb C^n)
 \right\}.
 \label{eq:SPT-def-main}
\end{equation}
The partial-transpose inverse eigenvalue problem is to characterize $\cS_{\rm PT}^{m,n}$.
\end{definition}

A useful comparison is provided by block-positive operators. Let $\cS_{\rm BP,1}^{m,n}$ denote the ordered spectra of trace-one Hermitian operators $W$ satisfying
\begin{equation}
 \bra{a,b}W\ket{a,b}\ge0
 \label{eq:block-positive-def-main}
\end{equation}
for every product vector $\ket{a,b}$. Every partial transpose of a density operator is block positive, since
\begin{equation}
 \bra{a,b}\rho^{\Gamma_B}\ket{a,b}
 =
 \bra{a,b^*}\rho\ket{a,b^*}
 \ge0.
 \label{eq:block-positivity-main}
\end{equation}
Hence
\begin{equation}
 \cS_{\rm PT}^{m,n}\subseteq\cS_{\rm BP,1}^{m,n}.
 \label{eq:SPT-in-BP-main}
\end{equation}
The inclusion need not be an equality: block positivity of $W$ does not require $W$ itself to arise as the partial transpose of a density operator.

The nonnegative part of $\cS_{\rm PT}^{m,n}$ follows immediately. For any ordered nonnegative eigenvalue list
\begin{equation}
 \lambda_1\ge\cdots\ge\lambda_{mn}\ge0,
 \qquad
 \sum_{i=1}^{mn}\lambda_i=1,
 \label{eq:PPT-simplex-main}
\end{equation}
the product-basis diagonal state
\begin{equation}
 \rho=\diag(\lambda_1,\ldots,\lambda_{mn})
 \label{eq:PPT-realization-main}
\end{equation}
satisfies $\rho^{\Gamma_B}=\rho$. The inverse problem is therefore nontrivial only for candidate eigenvalue lists with a negative entry.

For two qubits, an ordered candidate eigenvalue list with one negative entry can be written as
\begin{equation}
 \boldsymbol{\lambda}=(x,y,z,-q),
 \qquad
 x\ge y\ge z\ge0,\quad q>0,
 \label{eq:twoqubit-spectrum-main}
\end{equation}
with
\begin{equation}
 x+y+z-q=1.
 \label{eq:twoqubit-trace-main}
\end{equation}
The corresponding inverse eigenvalue theorem for block-positive operators requires only
\begin{equation}
 \boldsymbol{\lambda}\in\cS_{\rm BP,1}^{2,2}
 \quad\Longleftrightarrow\quad
 q\le y,\qquad q^2\le xz
 \label{eq:BP-characterization-main}
\end{equation}
\cite[Theorem~3]{JohnstonPatterson2018}. 

\subsection{The two-copy antisymmetric subspace}

For a bipartite density operator $\rho\in\cD(A\otimes B)$, write $\tau=\rho^{\Gamma_B}$ and $\cH=A\otimes B$. To derive constraints on the spectrum of $\tau$ from positivity of $\rho$, consider $\tau\otimes\tau$ and decompose its globally antisymmetric sector according to the exchange symmetries of the two local subsystems. Use the canonical regrouping
\begin{equation}
 (A_1\otimes B_1)\otimes(A_2\otimes B_2)
 \simeq
 (A_1\otimes A_2)\otimes(B_1\otimes B_2).
 \label{eq:regroup-main}
\end{equation}
Let $F_A$ and $F_B$ exchange the two copies of $A$ and $B$, respectively, and define the corresponding local parity projectors
\begin{equation}
 P_\pm^A=\frac{I\pm F_A}{2},
 \qquad
 P_\pm^B=\frac{I\pm F_B}{2}.
 \label{eq:local-parity-main}
\end{equation}
The swap of the two complete bipartite copies is $F_{\cH}=F_AF_B$, so
\begin{equation}
 P_-^{\cH}=\frac{I-F_AF_B}{2}
 \label{eq:global-antisym-main}
\end{equation}
projects onto the globally antisymmetric subspace $\wedge^2\cH$.

Global antisymmetry requires opposite local exchange parities: one local pair is antisymmetric while the other is symmetric. Introduce
\begin{align}
 Q_E&=P_-^A\otimes P_+^B,
 &
 \cE&=\wedge^2A\otimes\Sym^2B,
 \label{eq:QE-main}\\
 Q_F&=P_+^A\otimes P_-^B,
 &
 \cF&=\Sym^2A\otimes\wedge^2B.
 \label{eq:QF-main}
\end{align}
The projectors are orthogonal and satisfy $Q_E+Q_F=P_-^{\cH}$, which gives
\begin{equation}
 \wedge^2(A\otimes B)=\cE\oplus\cF.
 \label{eq:skew-Cauchy-main}
\end{equation}
This is the degree-two skew Cauchy decomposition \cite{FultonHarris1991}. On $\wedge^2\cH$ define
\begin{equation}
 J=Q_E-Q_F.
 \label{eq:J-main}
\end{equation}
The operator $J$ is Hermitian and unitary, acting as $+1$ on $\cE$ and $-1$ on $\cF$.

Because $\tau\otimes\tau$ commutes with the complete-copy swap $F_{\cH}$, it preserves $\wedge^2\cH$. Denote its restriction to this subspace by
\begin{equation}
 K
 :=
 P_-^{\cH}(\tau\otimes\tau)P_-^{\cH}
 \big|_{\wedge^2\cH}
 =
 \wedge^2\tau .
 \label{eq:K-main}
\end{equation}
If $\tau\ket{v_i}=\lambda_i\ket{v_i}$, then the antisymmetric vector
\begin{equation}
 \ket{v_i}\wedge\ket{v_j}
 :=
 \frac{
 \ket{v_i}\ket{v_j}-\ket{v_j}\ket{v_i}
 }{\sqrt2}
\end{equation}
satisfies
\begin{equation}
 K(\ket{v_i}\wedge\ket{v_j})
 =
 \lambda_i\lambda_j
 (\ket{v_i}\wedge\ket{v_j}),
 \qquad i<j.
 \label{eq:wedge-spectrum-main}
\end{equation}
Thus
\begin{equation}
 \spec(K)=\{\lambda_i\lambda_j:i<j\}.
 \label{eq:K-spectrum-main}
\end{equation}
The restriction of $\tau\otimes\tau$ to $\wedge^2\cH$ therefore has spectrum consisting of all pairwise products $\lambda_i\lambda_j$ with $i<j$.

To connect these products with positivity of the original state, define
\begin{equation}
 S_E=Q_E\rho^{\otimes2}Q_E,
 \qquad
 S_F=Q_F\rho^{\otimes2}Q_F.
 \label{eq:SE-SF-main}
\end{equation}
Both operators are positive semidefinite. Here and below, partial transposition on the regrouped two-copy space acts on the full factor $B_1\otimes B_2$.

\begin{proposition}[Two-copy parity-block identity]
\label{prop:parity-block-main}
For every bipartite density operator,
\begin{equation}
 K+JKJ
 =
 2\left(
 S_E^{\Gamma_{B_1B_2}}
 \oplus
 S_F^{\Gamma_{B_1B_2}}
 \right).
 \label{eq:block-identity-main}
\end{equation}
\end{proposition}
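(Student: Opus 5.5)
The plan is to compute $K+JKJ$ blockwise with respect to the decomposition $\wedge^2\cH=\cE\oplus\cF$ of Eq.~\eqref{eq:skew-Cauchy-main}. Since $J=Q_E-Q_F$ with $Q_E+Q_F=P_-^{\cH}$, write $K$ as a $2\times2$ block operator with diagonal blocks $Q_EKQ_E$, $Q_FKQ_F$ and off-diagonal blocks $Q_EKQ_F$, $Q_FKQ_E$. Conjugation by $J$ flips the sign of the off-diagonal blocks and leaves the diagonal ones unchanged. Hence
\begin{equation}
K+JKJ=2\left(Q_E(\tau\otimes\tau)Q_E\oplus Q_F(\tau\otimes\tau)Q_F\right),
\end{equation}
where we use $Q_{E,F}P_-^{\cH}=Q_{E,F}$ to drop the outer $P_-^{\cH}$. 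The proposition then reduces to the two identities $Q_E(\tau\otimes\tau)Q_E=(Q_E\rho^{\otimes2}Q_E)^{\Gamma_{B_1B_2}}$ and the analogous one for $Q_F$.

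For these identities, regroup as in Eq.~\eqref{eq:regroup-main}. Then $\tau\otimes\tau=(\rho\otimes\rho)^{\Gamma_{B_1B_2}}$, where the partial transpose acts on the whole factor $B_1\otimes B_2$ in the product basis. The key commutation facts concern how the local projectors interact with this transpose. The $A$-parts $P_\pm^A$ are untouched by $\Gamma_{B_1B_2}$, so they can be moved through freely. For the $B$-part, $F_B$ is a real symmetric matrix in the product basis, so $F_B^{\tp}=F_B$, and therefore $P_\pm^B$ is invariant under transposition.

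The standard rule for partial transposes is $(X\,M\,Y)^{\Gamma}=(1\otimes Y_B^{\tp})\,M^{\Gamma}\,(1\otimes X_B^{\tp})$ when $X$ and $Y$ act only on the transposed factor, while operators acting only on the untransposed factor pass through unchanged. For $Q_E=P_-^A\otimes P_+^B$ this rule gives
\begin{equation}
(Q_E\rho^{\otimes2}Q_E)^{\Gamma_{B_1B_2}}=(P_-^A\otimes P_+^{B\,\tp})(\rho^{\otimes2})^{\Gamma_{B_1B_2}}(P_-^A\otimes P_+^{B\,\tp}).
\end{equation}
The $B$-factors are reversed in order, but both sides carry the same projector $P_+^B$, so the order does not matter. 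Using $P_+^{B\,\tp}=P_+^B$, the right-hand side equals $Q_E(\tau\otimes\tau)Q_E$, as required. The $\cF$ block is identical with the parities swapped.

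The only delicate step is this transpose bookkeeping. One must check carefully that the partial transpose of a sandwich $XMX$, with $X=X_A\otimes X_B$, equals $(X_A\otimes X_B^{\tp})M^{\Gamma}(X_A\otimes X_B^{\tp})$; the order reversal on $B$ is harmless here because the same projector appears on both sides. One must also confirm that the swap on $B_1B_2$ is transpose-invariant. A quick check in components, $\langle b_1b_2|F_B|b_1'b_2'\rangle=\delta_{b_1b_2'}\delta_{b_2b_1'}$, which is symmetric, settles the latter. Finally, the direct sum in Eq.~\eqref{eq:block-identity-main} should be read as block-diagonal on $\cE\oplus\cF$. Note that $S_E^{\Gamma_{B_1B_2}}$ is supported on $\cE$ by the same commutation argument, so the identification is consistent.
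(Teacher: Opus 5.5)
Your proposal is correct and follows the same route as the paper: $J$-pinching reduces $K+JKJ$ to $2(Q_EKQ_E+Q_FKQ_F)$. The real symmetric projectors $P_\pm^B$ then let the compressions commute with $\Gamma_{B_1B_2}$ applied to $\tau\otimes\tau=(\rho^{\otimes2})^{\Gamma_{B_1B_2}}$. Your sandwich rule $(XMX)^{\Gamma}=(X_A\otimes X_B^{\tp})M^{\Gamma}(X_A\otimes X_B^{\tp})$ only spells out the commutation step that the paper states in one line.
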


\begin{proof}
Relative to the decomposition $\wedge^2\cH=\cE\oplus\cF$, conjugation by $J$ leaves the diagonal blocks of $K$ unchanged and reverses the sign of the off-diagonal blocks. Hence
\begin{equation}
 K+JKJ
 =
 2(Q_EKQ_E+Q_FKQ_F).
 \label{eq:J-pinching-main}
\end{equation}
Moreover,
\begin{equation}
 \tau\otimes\tau
 =
 (\rho^{\otimes2})^{\Gamma_{B_1B_2}}.
\end{equation}
The projectors $P_\pm^B$ are real and symmetric and are therefore invariant under transposition. Compression by $Q_E$ or $Q_F$ consequently commutes with the partial transpose, giving
\begin{equation}
 Q_EKQ_E=S_E^{\Gamma_{B_1B_2}},
 \qquad
 Q_FKQ_F=S_F^{\Gamma_{B_1B_2}}.
 \label{eq:compression-PT-main}
\end{equation}
Substitution into Eq.~\eqref{eq:J-pinching-main} proves Eq.~\eqref{eq:block-identity-main}.
\end{proof}

Proposition~\ref{prop:parity-block-main} holds in arbitrary local dimensions. The simplification specific to two qubits follows from the one-dimensionality of the local antisymmetric subspaces.

\begin{theorem}[Two-copy positivity and eigenvalue pairing]
\label{thm:two-copy-pairing-main}
Let $\rho$ be a two-qubit density operator and $K=\wedge^2(\rho^{\Gamma_B})$. Then
\begin{equation}
 K+JKJ\succeq0.
 \label{eq:KJK-positive-main}
\end{equation}
If
\begin{equation}
 \kappa_1\ge\kappa_2\ge\cdots\ge\kappa_6
\end{equation}
are the eigenvalues of $K$, then
\begin{equation}
 \kappa_i+\kappa_{7-i}\ge0,
 \qquad i=1,2,3.
 \label{eq:kappa-pairing-main}
\end{equation}
\end{theorem}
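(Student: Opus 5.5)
Both assertions will come out of Proposition~\ref{prop:parity-block-main}, combined with the fact that for two qubits the local antisymmetric spaces are one-dimensional.

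\emph{Positivity.} By Eq.~\eqref{eq:block-identity-main}, it is enough to show that $S_E^{\Gamma_{B_1B_2}}\succeq0$ and $S_F^{\Gamma_{B_1B_2}}\succeq0$. Take $\cF=\Sym^2A\otimes\wedge^2B$ first. Since $\dim\wedge^2B=1$, spanned by the singlet $\ket{s}$, we have $Q_F=P_+^A\otimes\proj{s}$, and therefore
\begin{equation*}
S_F=X_A\otimes\proj{s},\qquad X_A=\bra{s}\rho^{\otimes2}\ket{s}\succeq0 ,
\end{equation*}
where the compression is taken within the symmetric $A$ sector. The partial transpose on $B_1B_2$ acts only on the one-dimensional factor, and $\proj{s}^{\Gamma_{B_1B_2}}=\proj{s}^{\tp}=\proj{s}$ because $\ket{s}$ is real. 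Hence $S_F^{\Gamma_{B_1B_2}}=S_F\succeq0$.

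For $\cE=\wedge^2A\otimes\Sym^2B$, we have $Q_E=\proj{s}_A\otimes P_+^B$, so $S_E=\proj{s}_A\otimes Y_B$ with $Y_B\succeq0$ supported on $\Sym^2B$. Here the full transpose on $B_1B_2$ sends $Y_B$ to $Y_B^{\tp}$, which is again positive and still supported on $\Sym^2B$, since $P_+^B$ is real. Both blocks are therefore positive, which gives $K+JKJ\succeq0$.

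This positivity step is the point that needs the most care. The regrouping must be checked so that ``partial transpose on $B_1B_2$'' really is the full transpose of the two-copy $B$ factor. Once that is fixed, the block for the antisymmetric side is an honest one-dimensional projector, and the transpose of a positive operator is positive.

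\emph{Pairing.} Since $J$ is unitary, $JKJ$ is unitarily equivalent to $K$ and has the same eigenvalues $\kappa_1\ge\cdots\ge\kappa_6$. Apply Weyl's inequality to the Hermitian sum $K+JKJ$:
\begin{equation*}
\lambda_{i+j-1}(K+JKJ)\le\lambda_i(K)+\lambda_j(JKJ),
\end{equation*}
with eigenvalues listed in decreasing order. Choosing $j=7-i$ gives
\begin{equation*}
\lambda_6(K+JKJ)\le\kappa_i+\kappa_{7-i}.
\end{equation*}
The left-hand side is nonnegative by the positivity established above, so $\kappa_i+\kappa_{7-i}\ge0$ for $i=1,2,3$. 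This is a routine step once positivity is in hand.
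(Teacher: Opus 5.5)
Your proof is correct and follows the paper's argument. As in the paper, the one-dimensionality of $\wedge^2A$ and $\wedge^2B$ factorizes $S_E$ and $S_F$ so that both stay positive under $\Gamma_{B_1B_2}$, and the parity-block identity then gives $K+JKJ\succeq0$. The only cosmetic difference is that you obtain the pairing from Weyl's inequality $\lambda_6(K+JKJ)\le\kappa_i+\kappa_{7-i}$, whereas the paper applies ordered-eigenvalue monotonicity to $K\succeq-JKJ$; the paper itself describes this as a single Weyl/Loewner step.
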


\begin{proof}
For $A\simeq B\simeq\mathbb C^2$,
\begin{equation}
 \dim\wedge^2A=\dim\wedge^2B=1.
\end{equation}
Let $\ket{a_-}$ and $\ket{b_-}$ be normalized generators of these two antisymmetric spaces. Since $S_E,S_F\succeq0$, there exist $X_B,Y_A\succeq0$ such that
\begin{equation}
\begin{aligned}
 S_E &= \ket{a_-}\!\bra{a_-}\otimes X_B,\\
 S_F &= Y_A\otimes\ket{b_-}\!\bra{b_-}.
\end{aligned}
\label{eq:factor-SE-SF-main}
\end{equation}
They remain positive semidefinite after partial transposition, and Proposition~\ref{prop:parity-block-main} therefore gives Eq.~\eqref{eq:KJK-positive-main}.

The latter inequality is equivalent to $K\succeq-JKJ$. Since $J$ is unitary, $-JKJ$ has ordered eigenvalues $-\kappa_6,\ldots,-\kappa_1$; Loewner monotonicity then gives $\kappa_i\ge-\kappa_{7-i}$, i.e., Eq.~\eqref{eq:kappa-pairing-main}.
\end{proof}

\section{Two-qubit inverse eigenvalue theorem and steering geometry}
\label{sec:two-qubit-qse}

The two-qubit inverse problem is now settled by the following complete characterization.
\begin{theorem}[Complete two-qubit inverse eigenvalue theorem]
\label{thm:inverse-main}
Let
\[
\boldsymbol{\lambda}
=(\lambda_1,\lambda_2,\lambda_3,\lambda_4),
\qquad
\lambda_1\ge\lambda_2\ge\lambda_3\ge\lambda_4,
\qquad
\sum_{i=1}^4\lambda_i=1.
\]
There exists a two-qubit density operator $\rho$ such that
\begin{equation}
 \spec^{\desc}(\rho^{\Gamma_B})
 =\boldsymbol{\lambda}
 \label{eq:inverse-theorem-spectrum-main}
\end{equation}
if and only if either $\lambda_4\ge0$, or
\begin{equation}
 \boldsymbol{\lambda}=(x,y,z,-q),
 \qquad
 x\ge y\ge z\ge0,\quad q>0,
 \label{eq:ordered-NPT-main}
\end{equation}
with
\begin{equation}
 q\le y,\qquad qy\le xz.
 \label{eq:inverse-conditions-main}
\end{equation}
\end{theorem}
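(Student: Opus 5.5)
The proof splits into the PPT case, necessity of the two inequalities in the NPT case, and an explicit construction for sufficiency.

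\textbf{PPT case.} If $\lambda_4\ge0$, the diagonal state of Eq.~\eqref{eq:PPT-realization-main} satisfies $\rho^{\Gamma_B}=\rho$, so every such list is realized. If $\lambda_4<0$, the known fact that $\rho^{\Gamma_B}$ has at most one negative eigenvalue forces the form \eqref{eq:ordered-NPT-main}. It then remains to prove necessity and sufficiency of \eqref{eq:inverse-conditions-main}.

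\textbf{Necessity.} Since $\cS_{\rm PT}^{2,2}\subseteq\cS_{\rm BP,1}^{2,2}$, Eq.~\eqref{eq:BP-characterization-main} already gives $q\le y$. For $qy\le xz$ I would apply Theorem~\ref{thm:two-copy-pairing-main}. The six pairwise products are $xy$, $xz$, $yz$, $-qx$, $-qy$, $-qz$, all nonnegative or nonpositive as indicated.
\begin{itemize}
\item The ordered list is $\kappa_1=xy\ge\kappa_2=xz\ge\kappa_3=yz\ge0$, followed by $\kappa_4=-qz\ge\kappa_5=-qy\ge\kappa_6=-qx$.
\item The pairing $\kappa_1+\kappa_6\ge0$ gives $xy\ge qx$, i.e.\ $y\ge q$ (or $x=0$, which is trivial). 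This recovers $q\le y$ independently.
\item The pairing $\kappa_2+\kappa_5\ge0$ gives exactly $xz\ge qy$.
\item The pairing $\kappa_3+\kappa_4\ge0$ gives $yz\ge qz$, which is implied by the first.
\end{itemize}
Necessity is therefore essentially immediate once the ordering of the products is checked.

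\textbf{Sufficiency.} The abstract indicates that an $X$ state suffices. I would take $\rho$ with diagonal $(a,b,c,d)$ in the computational basis, with one coherence $\gamma$ on the $\ket{00},\ket{11}$ block and positivity $|\gamma|^2\le ad$. Its partial transpose keeps $a,d$ on the diagonal as eigenvalues and moves $\gamma$ to the $\ket{01},\ket{10}$ block. That block has eigenvalues
\[
\frac{b+c}{2}\pm\sqrt{\left(\frac{b-c}{2}\right)^2+|\gamma|^2},
\]
and a negative one requires $|\gamma|^2>bc$.

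\textbf{Matching the spectrum.} This is the main obstacle. The inner block must produce one positive eigenvalue together with $-q$, so its trace is $b+c=s-q$ and its determinant is $bc-|\gamma|^2=-sq$, where $s$ is the positive eigenvalue. Then $\{a,d\}$ are the other two positive eigenvalues. The question is which positive eigenvalue to pair with $-q$.
\begin{itemize}
\item Pairing $s=y$ is the natural first try, suggested by $c=(y-q)/(1-z)$ and by the extremal structure. Then $\{a,d\}=\{x,z\}$, $b+c=y-q\ge0$, and we need $|\gamma|^2=bc+qy\le ad=xz$.
\item Choosing $b=0$, $c=y-q$ minimizes $bc$ and gives $|\gamma|^2=qy$. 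The condition becomes exactly $qy\le xz$, and $c\ge0$ is exactly $q\le y$.
\end{itemize}
Both inequalities are then used precisely and the construction closes. I would verify that the resulting spectrum is $\{x,z,y,-q\}$ and that positivity of $\rho$ holds: the diagonal entries are nonnegative and the only $2\times2$ block with coherence, on $\ket{00},\ket{11}$, satisfies $|\gamma|^2\le ad$.
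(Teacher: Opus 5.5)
Your proposal is correct and matches the paper's proof: necessity follows from applying the pairing inequalities $\kappa_i+\kappa_{7-i}\ge0$ to the ordered products $xy\ge xz\ge yz\ge -qz\ge -qy\ge -qx$. Sufficiency comes from an $X$ state whose partial-transpose coherent block has diagonal $(0,y-q)$ and off-diagonal entry $\sqrt{qy}$, hence eigenvalues $y$ and $-q$; your state, with diagonal $(x,0,y-q,z)$ and coherence between $\ket{00}$ and $\ket{11}$, is the paper's $\rho_X$ after a local bit flip on $B$.
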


\begin{proof} 
A nonnegative trace-one spectrum is realized by a product-basis diagonal state. Suppose now that the candidate eigenvalue list in Eq.~\eqref{eq:ordered-NPT-main} is realized as $\operatorname{spec}(\rho^{\Gamma_B})$ for an NPT state $\rho$. The ordered eigenvalues of $K=\wedge^2(\rho^{\Gamma_B})$ are

\begin{equation}
 xy,\quad xz,\quad yz,\quad -qz,\quad -qy,\quad -qx.
 \label{eq:wedge-eigenvalues-main}
\end{equation}
The pairing inequalities in Theorem~\ref{thm:two-copy-pairing-main} therefore give
\begin{equation}
 x(y-q)\ge0,\qquad
 xz-qy\ge0,\qquad
 z(y-q)\ge0.
 \label{eq:necessity-three-main}
\end{equation}
Since $x>0$, the first two yield $q\le y$ and $qy\le xz$; the third is then redundant.

Conversely, assume Eq.~\eqref{eq:inverse-conditions-main} and define
\begin{equation}
 \rho_X=
 \begin{pmatrix}
 0&0&0&0\\
 0&x&\sqrt{qy}&0\\
 0&\sqrt{qy}&z&0\\
 0&0&0&y-q
 \end{pmatrix}
 \label{eq:X-state-main}
\end{equation}
in the product basis $\{\ket{00},\ket{01},\ket{10},\ket{11}\}$. The inequalities $q\le y$ and $qy\le xz$ imply $\rho_X\succeq0$, and the trace condition gives $\Tr\rho_X=1$. Its partial transpose is
\begin{equation}
 \rho_X^{\Gamma_B}=
 \begin{pmatrix}
 0&0&0&\sqrt{qy}\\
 0&x&0&0\\
 0&0&z&0\\
 \sqrt{qy}&0&0&y-q
 \end{pmatrix}.
 \label{eq:X-state-PT-main}
\end{equation}
The corner block has characteristic polynomial
\[
 \mu^2-(y-q)\mu-qy=(\mu-y)(\mu+q),
\]
so $\spec^{\desc}(\rho_X^{\Gamma_B}) =(x,y,z,-q)=\boldsymbol{\lambda}$.
\end{proof}

Combining Theorem~\ref{thm:inverse-main} with the block-positive characterization in Eq.~\eqref{eq:BP-characterization-main} gives the strict inclusion $\cS_{\rm PT}^{2,2}\subsetneq\cS_{\rm BP,1}^{2,2}$. For example, $(x,y,z,q) = \left( \frac{21}{25}, \frac15, \frac1{100}, \frac1{20} \right)$ satisfies $q^2=\frac1{400} \le \frac{21}{2500}=xz < \frac1{100}=qy$, as well as $q\le y$. It therefore satisfies the complete block-positive conditions but violates the second inverse-eigenvalue inequality.

\subsection{Steering geometry of the $X$-state representative}
\label{subsec:QSE-main}

Equation~\eqref{eq:inverse-conditions-main} has the following Bloch-geometric expression. Writing subsystems $A$ and $B$ as Alice and Bob, respectively, use the Fano representation
\begin{equation}
 \begin{aligned}
 \rho=\frac14\biggl[&I\otimes I
 +\bm r\cdot\bm\sigma\otimes I
 +I\otimes\bm s\cdot\bm\sigma\\
 &+\sum_{i,j=1}^{3}M_{ij}\sigma_i\otimes\sigma_j
 \biggr].
 \end{aligned}
 \label{eq:Fano-main}
\end{equation}
For a measurement by Bob along $\bm n$, Alice's conditional Bloch vector is
\begin{equation}
 \bm a(\bm n)
 =
 \frac{\bm r+M\bm n}{1+\bm s\cdot\bm n},
 \label{eq:conditional-Bloch-main}
\end{equation}
and these vectors fill the quantum steering ellipsoid \cite{Jevtic2014QSE}. Defining
\begin{equation}
 G=M-\bm r\bm s^{\mathsf T},
 \label{eq:connected-tensor-main}
\end{equation}
its center and shape matrix are
\begin{align}
 \bm c_A&=\frac{\bm r-M\bm s}{1-\abs{\bm s}^2},
 \label{eq:QSE-center-main}\\
 Q_A&=\frac{1}{1-\abs{\bm s}^2}
 G\left(I+\frac{\bm s\bm s^{\mathsf T}}{1-\abs{\bm s}^2}\right)G^{\mathsf T}.
 \label{eq:QSE-shape-main}
\end{align}
The semiaxes are the square roots of the eigenvalues of $Q_A$.

For the state in Eq.~\eqref{eq:X-state-main}, direct evaluation gives
\begin{align}
 \bm r_X&=(0,0,2x-1)^{\mathsf T},\nonumber\\
 \bm s_X&=(0,0,2z-1)^{\mathsf T},
 \label{eq:X-local-Bloch-main}\\
 M_X&=\diag\!\bigl(2\sqrt{qy},2\sqrt{qy},
 2(y-q)-1\bigr).
 \label{eq:X-correlation-main}
\end{align}
The ellipsoid is centered at
\begin{equation}
 \bm c_A=(0,0,-c)^{\mathsf T},
 \qquad
 c=\frac{y-q}{1-z},
 \label{eq:c-definition-main}
\end{equation}
and has semiaxes
\begin{equation}
 s_1=s_2=s_\perp,
 \qquad
 s_\perp^2=\frac{qy}{z(1-z)},
 \label{eq:transverse-axis-main}
\end{equation}
\begin{equation}
 s_3=s_\parallel=\frac{x}{1-z}=1-c.
 \label{eq:longitudinal-axis-main}
\end{equation}
Using $x+y+z-q=1$, we find $s_{\parallel}=1-c$, so the lower end of the ellipsoid is $-c-s_{\parallel}=-1$. Each such $X$ state therefore defines an axisymmetric steering spheroid tangent to the south pole of the Bloch sphere.

For a fixed center $c$, Milne et al. \cite{Milne2014ExtremalQSE,Milne2015Corrigendum} showed that the maximal-volume steering ellipsoid allowed at that center has semiaxes
\begin{equation}
 \sqrt{1-c},\quad\sqrt{1-c},\quad1-c,
 \label{eq:Milne-max-axes-main}
\end{equation}
and volume
\begin{equation}
 V_{\max}(c)=\frac{4\pi}{3}(1-c)^2.
 \label{eq:Milne-max-volume-main}
\end{equation}
The following corollary relates this geometric boundary directly to the inverse-eigenvalue inequalities.

\begin{corollary}[Geometric form of the inverse theorem]\label{cor:qse-main}
For the $X$-state representative in Eq.~\eqref{eq:X-state-main}, define
\begin{equation}
 \nu=\frac{qy}{xz}.
 \label{eq:nu-main}
\end{equation}
Then
\begin{equation}
 q\le y\quad\Longleftrightarrow\quad 0\le c\le1,
 \label{eq:first-geometric-main}
\end{equation}
while
\begin{equation}
 qy\le xz
 \quad\Longleftrightarrow\quad
 0\le\nu\le1
 \quad\Longleftrightarrow\quad
 \frac{V_X}{V_{\max}(c)}\le1.
 \label{eq:second-geometric-main}
\end{equation}
Furthermore,
\begin{equation}
 \frac{V_X}{V_{\max}(c)}
 =\frac{s_\perp^2}{1-c}
 =\frac{qy}{xz}=\nu.
 \label{eq:volume-ratio-main}
\end{equation}
\end{corollary}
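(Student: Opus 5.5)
The corollary is a direct computation from the explicit quantities already derived for $\rho_X$: the center $c=(y-q)/(1-z)$ in Eq.~\eqref{eq:c-definition-main}, the semiaxes in Eqs.~\eqref{eq:transverse-axis-main}--\eqref{eq:longitudinal-axis-main}, and Milne's $V_{\max}(c)$ in Eq.~\eqref{eq:Milne-max-volume-main}. I would first establish the volume identity~\eqref{eq:volume-ratio-main}, since the second equivalence follows from it. The remaining work is elementary algebra with the trace relation $x+y+z-q=1$, which rewrites as $1-z=x+y-q$, and hence $1-c=x/(1-z)$.

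For Eq.~\eqref{eq:first-geometric-main}, I will note that $1-z>0$. Indeed $z\le y$, and $z=1$ would force $x+y-q=0$ with $x\ge y\ge q$ (under $q\le y$) or generally $x\ge z$, which is impossible since $x\ge z=1$ would give trace $>1$ unless $q>0$ compensates. To avoid this detour, I will just use $1-z=x+(y-q)$ and $x\ge y\ge z$. Then $c\ge0\iff y\ge q$. Also $c\le1\iff y-q\le x+y-q\iff x\ge0$, which always holds. So the interval condition reduces exactly to $q\le y$. The only care needed is to confirm $1-z>0$ in the regime considered. If $q\le y$, then $1-z=x+(y-q)\ge x>0$. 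If $q>y$, I will argue directly from $x+y+z-q=1$ with $x\ge y\ge z$ that $1-z>0$, or else treat the degenerate sign separately.

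For the volume, I will write $V_X=\tfrac{4\pi}{3}s_\perp^2 s_\parallel$. Using $s_\parallel=1-c$, this gives $V_X/V_{\max}(c)=s_\perp^2(1-c)/(1-c)^2=s_\perp^2/(1-c)$. Substituting $s_\perp^2=qy/[z(1-z)]$ and $1-c=x/(1-z)$ gives $qy/(xz)=\nu$. The second equivalence in Eq.~\eqref{eq:second-geometric-main} then follows immediately. For the first equivalence, $\nu\ge0$ is automatic because all of $q,y,x,z$ are nonnegative. Hence $qy\le xz\iff\nu\le1$ whenever $xz>0$.

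The main obstacle is the degenerate boundary cases $z=0$ or $c=1$, where $s_\perp$ or $V_{\max}$ is ill-defined. Note that $z=0$ with $qy\le xz$ forces $qy=0$, which is impossible for $q>0$ unless $y=0$, and then $q\le y$ fails. So I will state that the corollary concerns the nondegenerate regime $z>0$, which follows from the theorem's conditions since $qy\le xz$ with $q,y>0$ forces $z>0$. In that regime all denominators are positive, and I will handle the limits by continuity.
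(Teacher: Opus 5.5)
Your proposal is correct and is essentially the paper's own argument: the first equivalence comes from $c=(y-q)/(1-z)$ together with $x=(1-z)(1-c)$, and the rest comes from $V_X/V_{\max}(c)=s_\perp^2 s_\parallel/(1-c)^2=s_\perp^2/(1-c)=qy/(xz)$, with the paper likewise restricting to the regime $0<z<1$, $x>0$ in which $\rho_X$ is a state. One small correction to your hedge: when $q>y$ you cannot derive $1-z>0$ from ordering and trace alone (e.g.\ $(x,y,z,q)=(2,2,2,5)$ gives $1-z=-1$), so you need your fallback of treating the sign separately — for $1-z<0$ the condition $c\le1$ becomes $x\le0$, which fails, so $c>1$ and the equivalence still holds (at $z=1$, $c$ is undefined).
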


The first equivalence follows from $c=(y-q)/(1-z)$ and $x=(1-z)(1-c)$, while Eqs.~\eqref{eq:transverse-axis-main} and \eqref{eq:longitudinal-axis-main} give $s_\perp^2/(1-c)=qy/(xz)$. Geometrically, $c$ fixes the position of the steering-ellipsoid center, while $qy\le xz$ bounds its transverse size at that center. Saturation $qy=xz$ gives the maximal-volume steering ellipsoid allowed at the fixed center. If $q=y$ is also saturated, then $c=0$ and the ellipsoid becomes the full Bloch ball, corresponding to an entangled pure state.

\subsection{Canonical channel coordinates}
\label{subsec:channel-main}

Coordinates that separate the steering geometry from Bob's local filtering degree of freedom are introduced as follows. Define
\begin{equation}
 \beta=1-z.
 \label{eq:beta-main}
\end{equation}
Then
\begin{equation}
 z=1-\beta,\qquad
 x=\beta(1-c),\qquad
 y=q+c\beta,
 \label{eq:spectral-cnb-main}
\end{equation}
with
\begin{equation}
 q(q+c\beta)=\nu\beta(1-c)(1-\beta),
 \label{eq:q-cnb-equation-main}
\end{equation}
or equivalently
\begin{equation}
 q=\frac{\sqrt{c^2\beta^2+4\nu\beta(1-c)(1-\beta)}-c\beta}{2}.
 \label{eq:q-cnb-main}
\end{equation}
Subject to Eq.~\eqref{eq:ordered-NPT-main}, $c$, $\nu$, and $\beta$ describe the ellipsoid center, its relative volume, and Bob's local filtering degree of freedom, respectively.

Applying the standard invertible filter that makes Bob's reduced state maximally mixed \cite{Verstraete2001Filtering,Jevtic2014QSE} removes $\beta$ without changing Alice's steering ellipsoid, giving
\begin{equation}
 \widetilde\rho(c,\nu)=
 \begin{pmatrix}
 0&0&0&0\\
 0&\frac{1-c}{2}&\frac{\sqrt{\nu(1-c)}}{2}&0\\
 0&\frac{\sqrt{\nu(1-c)}}{2}&\frac12&0\\
 0&0&0&\frac c2
 \end{pmatrix}.
 \label{eq:canonical-X-main}
\end{equation}
With $\ket{\Psi^+}=(\ket{01}+\ket{10})/\sqrt2$ as the Choi reference, this state corresponds to amplitude damping with parameter $c$, followed by dephasing with coherence factor $\sqrt{\nu}$. Its Bloch map is
\begin{equation}
 \begin{aligned}
 (r_x,r_y,r_z)\mapsto\bigl(&\sqrt{\nu(1-c)}r_x,
 \sqrt{\nu(1-c)}r_y,\\
 &(1-c)r_z-c\bigr),
 \end{aligned}
 \label{eq:channel-Bloch-main}
\end{equation}
an affine qubit channel in the standard Bloch representation
\cite{Ruskai2002QubitChannels}, here of phase-covariant form.

Positivity of Eq.~\eqref{eq:canonical-X-main} is equivalent to $\nu\le1$. Within this canonical family, the same inequality is equivalent to the second inverse-eigenvalue inequality $qy\le xz$, the fixed-center QSE volume bound $V\le V_{\max}(c)$, and the condition for complete positivity of the associated channel. At $\nu=1$, the dephasing disappears and the canonical state is locally equivalent to the maximally obese state of Ref.~\cite{Milne2014ExtremalQSE}.

\begin{figure*}[t]
    \centering
    \includegraphics[width=\textwidth]{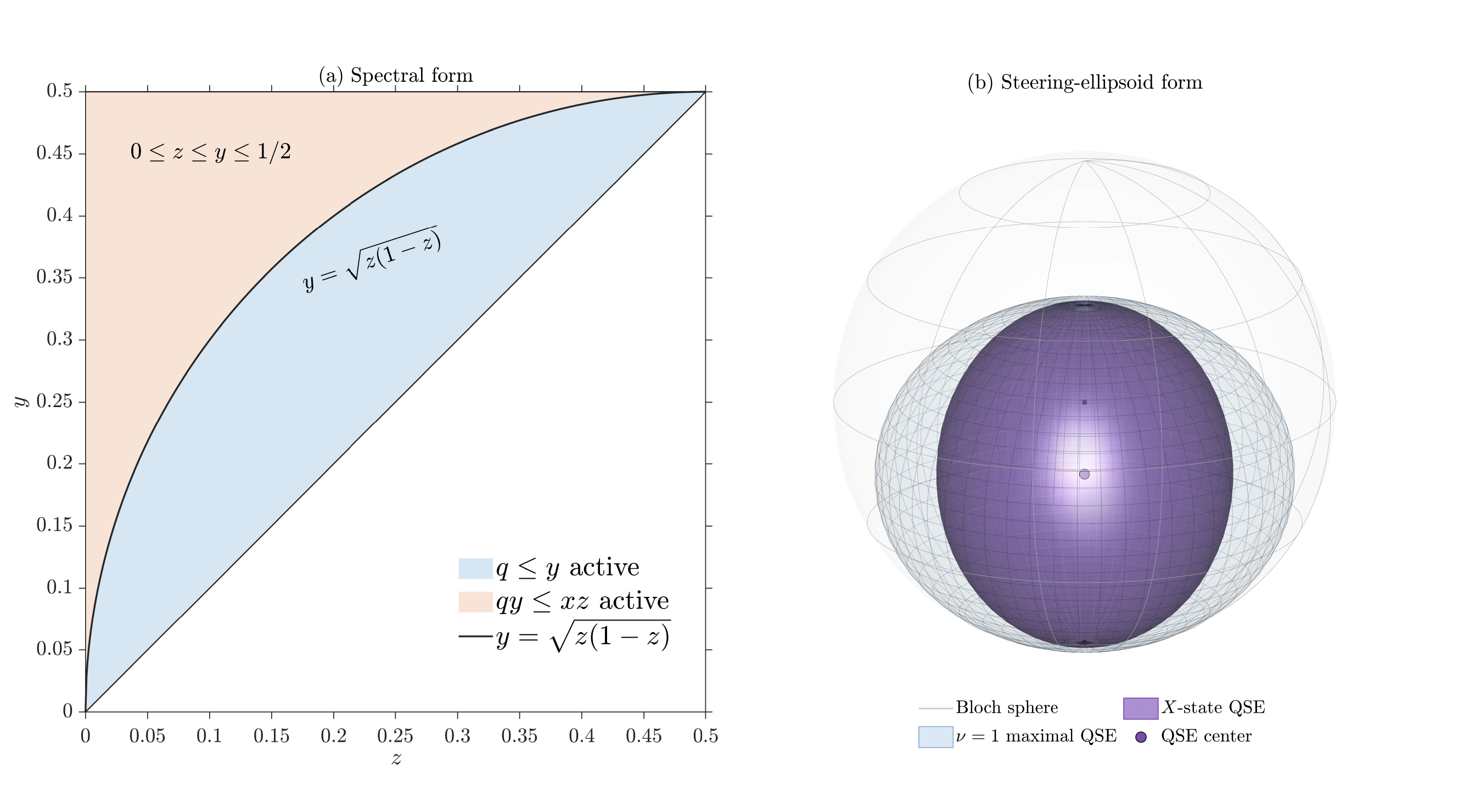}
    \caption{
Spectral and geometric forms of the two-qubit inverse eigenvalue theorem.
(a)~For an ordered candidate eigenvalue list $\boldsymbol{\lambda}=(x,y,z,-q)$ with one negative entry, the projected domain $0\le z\le y\le1/2$ is divided according to whether $q\le y$ or $qy\le xz$ sets the upper bound on $q$; the two bounds meet at $y=\sqrt{z(1-z)}$.
(b)~The $X$ state used to establish sufficiency has an axisymmetric steering ellipsoid tangent to the south pole, centered at $c=(y-q)/(1-z)$, with $s_{\parallel}=1-c$ and $s_{\perp}^{2}=\nu(1-c)$. Here $\nu=qy/(xz)=V_X/V_{\max}(c)$, so $\nu=1$ gives the maximal-volume steering ellipsoid allowed at fixed center and corresponds to the boundary $qy=xz$.
}
    \label{fig:spectral-QSE-main}
\end{figure*}

\section{Exact negativity range from two PT moments}
\label{sec:moments}

Consider an experiment in which only $p_2$ and $p_3$ are obtained.
\begin{equation}
 u=p_2=\Tr[(\rho^{\Gamma_B})^2],
 \qquad
 v=p_3=\Tr[(\rho^{\Gamma_B})^3].
 \label{eq:u-v-main}
\end{equation}
The second moment equals the ordinary purity $u=\Tr\rho^2$, whereas the third moment $v$ is sensitive to the signs of the eigenvalues of $\rho^{\Gamma_B}$. Define
\begin{equation}
 \cC(u,v)=\left\{\rho\in\cD(\bbC^2\otimes\bbC^2):
 p_2(\rho)=u,\ p_3(\rho)=v\right\},
 \label{eq:moment-compatible-set-main}
\end{equation}
and, whenever this set is nonempty,
\begin{align}
 \cN_{\min}(u,v)&=\min_{\rho\in\cC(u,v)}\cN(\rho),
 \label{eq:Nmin-main}\\
 \cN_{\max}(u,v)&=\max_{\rho\in\cC(u,v)}\cN(\rho).
 \label{eq:Nmax-main}
\end{align}
Compactness guarantees that both extrema are attained.

\subsection{A one-parameter spectral fiber}

For an ordered candidate eigenvalue list $(x,y,z,-q)$ with one negative entry, the three positive eigenvalues obey
\begin{align}
 x+y+z&=1+q,\\
 xy+xz+yz&=A_q,
 \label{eq:Aq-main}\\
 xyz&=B_q,
 \label{eq:Bq-main}
\end{align}
where
\begin{equation}
 A_q=\frac{1+2q+2q^2-u}{2},
 \label{eq:Aq-explicit-main}
\end{equation}
\begin{equation}
 B_q=\frac{1-3u+2v+3q(1-u)+6q^2+6q^3}{6}.
 \label{eq:Bq-explicit-main}
\end{equation}
Thus $x,y,z$ are the roots of
\begin{equation}
 f_q(t)=t^3-(1+q)t^2+A_qt-B_q.
 \label{eq:cubic-main}
\end{equation}
For fixed $(u,v)$, the ordered candidate eigenvalue lists can be parametrized by $q$: the remaining eigenvalues are the roots of Eq.~\eqref{eq:cubic-main}. Realizability requires $x\ge y\ge z\ge0$ together with $q\le y$ and $qy\le xz$.

Define the two inverse-eigenvalue slacks
\begin{equation}
 \alpha=y-q,
 \qquad
 \delta=xz-qy.
 \label{eq:slacks-main}
\end{equation}
The second and third moments become
\begin{align}
 u&=1-2\alpha+2\alpha^2-2\delta,
 \label{eq:p2-slacks-main}\\
 v&=1-3\alpha+3\alpha^2-3(1-\alpha)\delta
 -3(1-2\alpha)q(q+\alpha),
 \label{eq:p3-slacks-main}
\end{align}
and the inverse-eigenvalue inequalities reduce to
\begin{equation}
 \alpha\ge0,\qquad \delta\ge0.
 \label{eq:slack-physicality-main}
\end{equation}

Along a strictly ordered fixed-$(u,v)$ fiber,
\begin{equation}
 \frac{dq}{d\alpha}
 =-\frac{(x-y)(y-z)}
 {(1-2\alpha)(2q+\alpha)}<0,
 \label{eq:q-alpha-monotonic-main}
\end{equation}
while $d\delta/d\alpha=2\alpha-1<0$. Hence increasing $q$ decreases $\alpha$ and increases $\delta$. (Connectedness of the fixed-moment candidate curve and treatment of repeated roots are in the SM.) The boundary mechanisms for the minimum and maximum are as follows.

\begin{theorem}[Negativity extrema from $p_2$ and $p_3$]
\label{thm:moment-main}
Assume $\cC(u,v)\neq\varnothing$. If $\cC(u,v)$ contains NPT states, the set of positive negativity values
\[
\bigl\{\cN(\rho):\rho\in\cC(u,v),\ \cN(\rho)>0\bigr\}
\]
forms an interval. Its upper endpoint is attained at either
\begin{equation}
 y=z,\qquad q=y,
 \label{eq:upper-endpoints-main}
\end{equation}
while, if $\cC(u,v)$ contains no PPT state, its lower endpoint is positive and is attained at either
\begin{equation}
 x=y,\qquad qy=xz.
 \label{eq:lower-endpoints-main}
\end{equation}
Hence $\cN_{\min}(u,v)=0$ whenever $\cC(u,v)$ contains a PPT state, and
$\cN_{\max}(u,v)=0$ when it contains no NPT state.
\end{theorem}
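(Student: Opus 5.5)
Theorem~\ref{thm:inverse-main} reduces the problem to spectra: the attainable values of $\cN$ on $\cC(u,v)$ are exactly the attainable values of $q$ on ordered lists $(x,y,z,-q)$ with moments $(u,v)$. For NPT states $\cN=q$; for PPT states $\cN=0$. So the plan is to study the admissible set $I(u,v)$ of $q>0$ for which the cubic $f_q$ in Eq.~\eqref{eq:cubic-main} has three real roots $x\ge y\ge z\ge 0$ satisfying $\alpha=y-q\ge0$ and $\delta=xz-qy\ge0$. The PPT statements at the end are then immediate: $\cN\ge0$ always, any PPT member gives $\cN=0$, and with no NPT member every $\cN$ vanishes.

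\textbf{Interval property.} I would view the fixed-moment candidate set as a curve $\Gamma$ in $q$. It is cut out by the two equations Eqs.~\eqref{eq:p2-slacks-main}--\eqref{eq:p3-slacks-main} in the coordinates $(q,\alpha,\delta)$, intersected with the closed ordering constraints. I would invoke the connectedness of $\Gamma$ (deferred to the SM) and then use the monotonicity in Eq.~\eqref{eq:q-alpha-monotonic-main}. Along strictly ordered arcs, $q$ is a strictly decreasing function of $\alpha$, while $\delta$ is strictly increasing in $q$. Hence on $\Gamma$ the constraint $\alpha\ge0$ is an upper bound $q\le q_\alpha$, and $\delta\ge0$ is a lower bound $q\ge q_\delta$. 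The admissible set is the intersection of $\Gamma$'s $q$-range (a connected set, hence an interval) with $\{q_\delta\le q\le q_\alpha\}$, so it is again an interval. Intersecting with $q>0$ keeps it an interval, which gives the first claim.

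\textbf{Endpoints.} An endpoint of the admissible $q$-interval occurs where either one of the inequality constraints becomes tight or the curve $\Gamma$ itself terminates. The curve terminates only where two roots of $f_q$ coalesce, i.e.\ where the discriminant of $f_q$ vanishes, which means $x=y$ or $y=z$. (I would also rule out termination at $z=0$ or at the trace constraint, e.g.\ via $\delta\ge0$ forcing $z>0$ when $q>0$.)

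By the monotonicity, moving toward larger $q$ decreases $\alpha$ and moves toward $y=z$ coalescence. Here I would check the sign of $d(y-z)/dq$ from the implicit derivatives of the roots, $dx_i/dq\propto 1/f'_q(x_i)$. The upper endpoint is therefore $q=y$ or $y=z$. Symmetrically, the lower endpoint is $\delta=0$ (i.e.\ $qy=xz$) or $x=y$. If there is no PPT state, the lower endpoint cannot be at $q\to0$, so it is positive and lies on one of these two loci. Compactness of $\cC(u,v)$ gives attainment.

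\textbf{Main obstacle.} The delicate step is the direction argument: showing that $y=z$ coalescence can only bound $q$ from above and $x=y$ only from below. This needs sign control of $f_q'$ at the middle root together with the derivative formula. The repeated-root boundary cases, where Eq.~\eqref{eq:q-alpha-monotonic-main} degenerates, must be handled by a limiting argument along $\Gamma$.
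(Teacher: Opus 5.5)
Your outline has the same architecture as the paper's proof. You reduce to the admissible set of $q$ via Theorem~\ref{thm:inverse-main}. You use the slack monotonicity so that $\alpha\ge0$ can only cut the top of the fixed-moment curve and $\delta\ge0$ only its bottom. You identify the remaining endpoints with positive-root collisions. Finally you use compactness, together with the fact that a limiting list $(x,y,z,0)$ is realized by a diagonal PPT state. However, the steps that actually carry the theorem are not supplied. \emph{Connectedness} is only invoked. Without it, the set of $q>0$ for which $f_q$ has three nonnegative roots ($\Delta_q\ge0$, $B_q\ge0$) could a priori be several intervals, each with its own endpoints, and the interval claim does not follow. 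The \emph{direction} argument (assigning $y=z$ to the upper end and $x=y$ to the lower end) is left as your ``main obstacle''. You also need \emph{simple roots in the interior}: repeated roots must occur only at the ends of the curve, so that Eq.~\eqref{eq:q-alpha-monotonic-main} holds on its whole interior rather than on separate ``strictly ordered arcs''.

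The missing idea, which settles all three points, is this. For fixed $(u,v)$ the Newton identities fix $e_1,e_2,e_3$, so the only free coefficient of the characteristic polynomial is $h=e_4=\det\rho^{\Gamma_B}$. Thus $\chi_h(t)=g(t)+h$ with $g(t)=t^4-t^3+e_2t^2-e_3t$ fixed, and varying $h$ is a vertical translation.

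\textbf{Interval and simple roots.} When $g$ has critical points $r_1<r_2<r_3$, the $h$ giving four real roots form one closed interval: $-h$ lies between the larger local minimum of $g$ and its local maximum $g(r_2)$. In the interior of this interval all roots are simple. For $h<0$ there is exactly one negative root, since three negative roots would force $u>1$. Moreover $dh/dq=\chi_h'(-q)=-(q+x)(q+y)(q+z)<0$, so the candidate $q$-values form an interval.

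\textbf{Endpoints.} The largest-$q$ end is $h=-g(r_2)$, a double root at the local maximum. That means the two middle roots coincide, i.e.\ $y=z$. The other end is a double root at an outer critical point. For $h<0$ this must be $x=y$, because $z=-q$ is impossible. The only alternative is that the interval first reaches $h\to0^-$, where either $q\to0$ (the PPT limit) or $z\to0$ (excluded by $\delta\ge0$).

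\textbf{Your derivative heuristic.} It also closes with this structure. Differentiating $\chi_{h(q)}(x_i(q))=0$ gives $dx_i/dq=(q+x)(q+y)(q+z)/\chi_h'(x_i)$, with $\chi_h'(x_i)=(x_i+q)f_q'(x_i)$. Hence $x$ and $z$ increase and $y$ decreases as $q$ increases. So $y-z$ can only close at the top of the interval and $x-y$ only at the bottom.
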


The four extrema can be evaluated algebraically. For a repeated positive eigenvalue $t$, the root-collision condition is
\begin{equation}
 4t^3-3t^2+(1-u)t-\frac{1-3u+2v}{6}=0.
 \label{eq:collision-cubic-main}
\end{equation}
Defining
\begin{equation}
 D_t:=2u-1+4t-8t^2,
 \label{eq:Dt-main}
\end{equation}
the remaining eigenvalues are
\begin{equation}
 q(t)=t-\frac12+\frac12\sqrt{D_t},
 \qquad
 s(t)=\frac12-t+\frac12\sqrt{D_t}.
 \label{eq:collision-q-s-main}
\end{equation}
The ordering $0<q(t)\le s(t)\le t$ selects the $x=y$ branch, whereas $0<q(t)\le t\le s(t)$ selects the $y=z$ branch.

On the second inverse-eigenvalue boundary $qy=xz$, define
\begin{equation}
 d=\sqrt{2u-1},\qquad
 \alpha_\delta=\frac{1-d}{2},\qquad
 W_\delta=\frac{3u-1-2v}{6d},
 \label{eq:delta-wall-alpha-main}
\end{equation}
which gives
\begin{equation}
 q_\delta=
 \frac{-\alpha_\delta+\sqrt{\alpha_\delta^2+4W_\delta}}{2}.
 \label{eq:qdelta-main}
\end{equation}
On $q=y$,
\begin{equation}
 q_\alpha=\sqrt{\frac{3u-1-2v}{6}},
 \label{eq:qalpha-main}
\end{equation}
with $q_\alpha>0$ and $q_\alpha(1-2q_\alpha)\ge(1-u)/2$. (Branch conditions are listed in the SM.)

Since every nonnegative trace-one four-point spectrum is realized as $\operatorname{spec}(\rho^{\Gamma_B})$ by a PPT state, the existence of a PPT state consistent with $(u,v)$ is equivalent to
\begin{equation}
 \frac14\le u\le1,
 \qquad
 L_4(u)\le v\le U_4(u),
 \label{eq:PPT-moment-region-main}
\end{equation}
where
\begin{equation}
 U_4(u)=
 \frac{3(6u-1)+\sqrt3(4u-1)^{3/2}}{24},
 \label{eq:U4-main}
\end{equation}
and
\begin{equation}
 L_4(u)=
 \begin{cases}
 \dfrac{3(6u-1)-\sqrt3(4u-1)^{3/2}}{24},
 &\frac14\le u\le\frac13,\\[2mm]
 \dfrac{2(9u-2)-\sqrt2(3u-1)^{3/2}}{18},
 &\frac13\le u\le\frac12,\\[2mm]
 \dfrac{3u-1}{2},
 &\frac12\le u\le1.
 \end{cases}
 \label{eq:L4-main}
\end{equation}
Equation~\eqref{eq:PPT-moment-region-main} is therefore equivalent to $\cN_{\min}(u,v)=0$.

\subsection{Geometric meaning of the moment extrema}

In the QSE coordinates of Sec.~\ref{subsec:channel-main}, the two
inverse-eigenvalue slacks become
\begin{equation}
 \alpha=c\beta,\qquad
 \delta=\beta(1-\beta)(1-c)(1-\nu),
 \label{eq:slacks-geometric-main}
\end{equation}
so
\begin{equation}
 q=y\Longleftrightarrow c=0,\qquad
 qy=xz\Longleftrightarrow\nu=1.
 \label{eq:QSE-walls-main}
\end{equation}
If the minimum occurs at $\nu=1$, the steering ellipsoid reaches the
maximal volume allowed at fixed center; decreasing $q$ further would
require $\nu>1$ and violate complete positivity of the associated
channel. If the maximum occurs at $c=0$, then $y-q=0$, and the transverse
coherence supports the largest negative eigenvalue allowed by the fixed
moments. The alternatives $x=y$ and $y=z$ instead arise from degeneracies
among the positive eigenvalues of $\rho^{\Gamma_B}$.

Consider the candidate eigenvalue list
$\boldsymbol{\lambda}_*=(21/25,\,1/5,\,1/100,\,-1/20)$, which satisfies
the block-positive conditions but violates the second inverse-eigenvalue
inequality. It gives $p_2=3741/5000$ and $p_3=30029/50000$.
Theorem~\ref{thm:moment-main} yields
$0.0515639944\ldots\le\cN(\rho)\le0.0900140268\ldots$
for two-qubit states consistent with these moments. The value
$q=1/20$ associated with $\boldsymbol{\lambda}_*$ therefore lies below
this allowed negativity interval. Here the lower endpoint occurs at
$\nu=1$ and the upper endpoint at $y=z$.

\begin{figure*}[t]
    \centering
    \includegraphics[width=\textwidth]{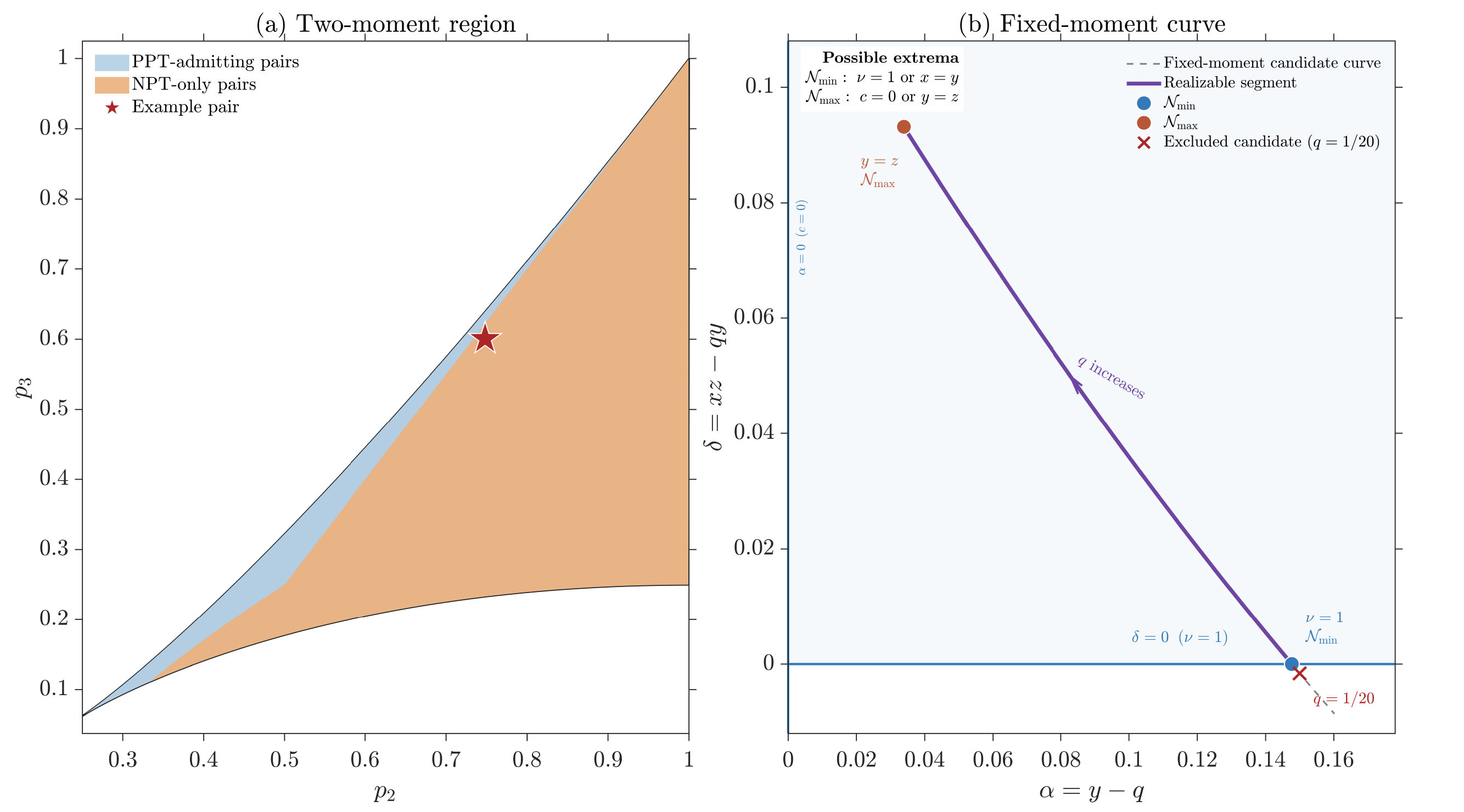}
\caption{
Exact negativity bounds from the two lowest nontrivial PT moments.
(a)~Allowed $(p_2,p_3)$ region generated by two-qubit states. The blue region contains moment pairs consistent with at least one PPT state, whereas in the orange region every consistent two-qubit state is NPT and $\mathcal N_{\min}(p_2,p_3)>0$. The star marks the moment pair associated with the candidate eigenvalue list $\boldsymbol{\lambda}_*=(21/25,\,1/5,\,1/100,\,-1/20)$.
(b)~Fixed-moment curve for the same example in $\alpha=y-q$ and $\delta=xz-qy$. The first quadrant satisfies both inverse-eigenvalue inequalities; the dashed curve contains candidate eigenvalue lists with the same $(p_2,p_3)$, and the thick purple segment is its realizable portion. The minimum occurs at $\nu=1$ $(\delta=0)$ and the maximum at $y=z$, while the point with $q=1/20$ lies outside the realizable segment.
}
    \label{fig:moment-range-main}
\end{figure*}

\section{Qubit--qudit extensions}
\label{sec:qubit-qudit}

We now turn to $2\otimes d$ systems, with
$A\simeq\bbC^2$, $B\simeq\bbC^d$, and
$\tau=\rho^{\Gamma_B}$. Although $\tau$ may have several negative
eigenvalues, each negative eigenvalue can be isolated within a
two-qubit subspace.

\begin{theorem}[Two-qubit compression theorem]
\label{thm:compression-main}
Suppose
\begin{equation}
 \tau\ket{\psi_-}=-q\ket{\psi_-},
 \qquad q>0.
 \label{eq:negative-eigenvector-main}
\end{equation}
Then $\ket{\psi_-}$ has Schmidt rank two. If $P_B$ denotes the
projector onto its Schmidt support on $B$, then
\begin{equation}
 \widetilde\tau=(I_2\otimes P_B)\tau(I_2\otimes P_B)
 \label{eq:compression-main}
\end{equation}
is the partial transpose of a positive, generally unnormalized
two-qubit operator and retains the eigenvalue $-q$. Writing
\begin{equation}
 \spec^{\desc}(\widetilde\tau)=\{a,b,c,-q\},
 \qquad a\ge b\ge c\ge0,
 \label{eq:compressed-spectrum-main}
\end{equation}
the two-qubit inverse-eigenvalue inequalities imply
\begin{equation}
 q\le b,\qquad qb\le ac.
 \label{eq:compressed-conditions-main}
\end{equation}
\end{theorem}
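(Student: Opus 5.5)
We argue in three steps: show that the negative eigenvector has Schmidt rank two, show that compressing onto its $B$-support preserves the eigenvalue $-q$ and positivity, and then apply Theorem~\ref{thm:inverse-main} after normalization.

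\emph{Schmidt rank.} By block positivity, Eq.~\eqref{eq:block-positivity-main}, every product vector satisfies $\bra{a,b}\tau\ket{a,b}\ge0$. Since $\bra{\psi_-}\tau\ket{\psi_-}=-q<0$, the vector $\ket{\psi_-}$ cannot be a product vector, so its Schmidt rank is at least two. Because $A\simeq\bbC^2$, the Schmidt rank is at most two. Hence it is exactly two, and its Schmidt support on $B$ is a two-dimensional subspace with projector $P_B$.

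\emph{The compression.} The operator $I_2\otimes P_B$ is a local operator on $B$. I would choose the local basis of $B$ so that $P_B$ projects onto the span of two basis vectors; this makes $P_B$ real and diagonal, so $P_B^{\tp}=P_B$. Then
\[
(I\otimes P_B)\rho^{\Gamma_B}(I\otimes P_B)=\bigl[(I\otimes P_B)\rho(I\otimes P_B)\bigr]^{\Gamma_B}.
\]
The spectrum is basis independent, so this choice costs nothing. The operator $\sigma:=(I\otimes P_B)\rho(I\otimes P_B)$ is positive semidefinite and lives on $\bbC^2\otimes\mathrm{ran}P_B\simeq\bbC^2\otimes\bbC^2$, so $\widetilde\tau=\sigma^{\Gamma_B}$ is the partial transpose of a positive two-qubit operator. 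Since $(I\otimes P_B)\ket{\psi_-}=\ket{\psi_-}$, we get
\[
\widetilde\tau\ket{\psi_-}=(I\otimes P_B)\tau\ket{\psi_-}=-q\ket{\psi_-},
\]
so $-q$ is retained. By the two-qubit result of Refs.~\cite{Sanpera1998,Rana2013}, the partial transpose of a positive two-qubit operator has at most one negative eigenvalue, because the property is scale invariant. The other three eigenvalues are therefore $a\ge b\ge c\ge0$.

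\emph{Normalization.} Set $t=\Tr\sigma=a+b+c-q$. We need $t>0$. If $t=0$, then $\sigma=0$ and $\widetilde\tau=0$, contradicting the eigenvalue $-q\neq0$. Apply Theorem~\ref{thm:inverse-main} to the density operator $\sigma/t$, whose PT spectrum is $(a,b,c,-q)/t$. This gives $q/t\le b/t$ and $(q/t)(b/t)\le(a/t)(c/t)$. Multiplying through by $t$ and $t^2$ respectively yields Eq.~\eqref{eq:compressed-conditions-main}.

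\emph{Main obstacle.} The delicate step is the commutation of the compression with the partial transpose. It requires $P_B^{\tp}=P_B$, which holds only after choosing the local basis adapted to the Schmidt support. I would state explicitly that $\tau$ is defined relative to that basis, or equivalently that changing the basis by a local unitary $U_B$ conjugates $\tau$ by $\bar U_B$. The remaining ingredients, Schmidt rank at most two and the single-negative-eigenvalue property, are standard.
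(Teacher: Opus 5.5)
Your proof is correct and follows the paper's argument step for step. You use block positivity plus $\dim A=2$ to get Schmidt rank two, the invariance $(I_2\otimes P_B)\ket{\psi_-}=\ket{\psi_-}$ to retain $-q$, the single-negative-eigenvalue property for two qubits, and normalization followed by homogeneity of the two-qubit inequalities. The one difference is technical. You adapt the basis so that $P_B^{\mathsf T}=P_B$, whereas the paper uses the basis-free identity $[(I_2\otimes P_B^{\mathsf T})\rho(I_2\otimes P_B^{\mathsf T})]^{\Gamma_B}=(I_2\otimes P_B)\rho^{\Gamma_B}(I_2\otimes P_B)$, taking the $P_B^{\mathsf T}$-filtered state as the positive preimage; this avoids the basis bookkeeping you flag as the main obstacle.
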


A negative eigenvector of $\rho^{\Gamma_B}$ cannot have Schmidt rank one,
because $\rho^{\Gamma_B}$ is nonnegative on all product vectors.
The corresponding locally projected operator
\begin{equation}
 \widetilde\rho=(I_2\otimes P_B^{\mathsf T})
 \rho(I_2\otimes P_B^{\mathsf T})\succeq0
 \label{eq:physical-filter-main}
\end{equation}
satisfies $\widetilde\rho^{\Gamma_B}=\widetilde\tau$. Defining
$p=\Tr\widetilde\rho$ and the normalized postselected state
$\sigma=\widetilde\rho/p$, one obtains
\begin{equation}
 p\cN(\sigma)=q,\qquad p\ge2b\ge2q.
 \label{eq:pNq-main}
\end{equation}
The Schmidt-rank-two negative eigenvector also provides the standard
one-copy distillability criterion
\cite{HorodeckiDistill1998,Dur2000,HorodeckiRMP2009}. The identity above
describes the negativity of the postselected two-qubit state and should
not be interpreted as an asymptotic distillation rate.

If the corresponding Schmidt supports on $B$ are pairwise orthogonal,
a projective measurement on $B$ separates the associated two-qubit
branches while preserving their average negativity. Details and an
explicit example are given in the SM.

The two-copy relation also extends to $2\otimes d$, but with an additional
term absent for two qubits. The $E$ contribution remains positive because
$\dim\wedge^2\bbC^2=1$, whereas the $F$ contribution need not remain
positive for $d\ge3$. Define
\begin{equation}
 t_F:=\Tr(Q_F\rho^{\otimes2}),
 \label{eq:tF-def-main}
\end{equation}
which is the probability associated with the sector that is symmetric
under exchange of the two qubit copies and antisymmetric under exchange
of the two qudit copies. Local parity measurements on two copies are standard
in multicopy entanglement protocols \cite{MintertBuchleitner2007}.

\begin{theorem}[Sharp two-copy correction bound]
\label{thm:defect-main}
For every $\rho\in\cD(\bbC^2\otimes\bbC^d)$,
\begin{equation}
 K+JKJ+t_FQ_F\succeq0.
 \label{eq:tF-bound-main}
\end{equation}
The coefficient of $t_FQ_F$ is optimal for every $d\ge3$, and
\begin{equation}
 t_F=\frac{1+\Tr\rho_A^2-\Tr\rho_B^2-\Tr\rho^2}{4}.
 \label{eq:tF-purities-main}
\end{equation}
\end{theorem}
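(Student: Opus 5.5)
The plan is to reduce everything to Proposition~\ref{prop:parity-block-main}, which is valid for $A\simeq\bbC^2$, $B\simeq\bbC^d$. It gives
\[
K+JKJ=2\bigl(S_E^{\Gamma_{B_1B_2}}\oplus S_F^{\Gamma_{B_1B_2}}\bigr).
\]
Since $Q_F$ annihilates $\cE$, it then suffices to show two things: $S_E^{\Gamma_{B_1B_2}}\succeq0$, and $2S_F^{\Gamma_{B_1B_2}}+t_FQ_F\succeq0$ on $\cF$.

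The $E$ block is handled exactly as in the proof of Theorem~\ref{thm:two-copy-pairing-main}. Because $\dim\wedge^2\bbC^2=1$, we have $S_E=\proj{a_-}\otimes X_B$ with $X_B\succeq0$ on $B_1B_2$. Its partial transpose on $B_1B_2$ is $\proj{a_-}\otimes X_B^{\tp}\succeq0$.

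For the $F$ block I would use the elementary bound that, for any positive operator $S$ on a bipartite space $X\otimes Y$, one has $\lambda_{\min}(S^{\Gamma_Y})\ge-\tfrac12\Tr S$.
\begin{itemize}
\item For a pure $S=\proj{\phi}$ with Schmidt coefficients $\sqrt{\mu_i}$, the eigenvalues of $S^{\Gamma_Y}$ are $\mu_i$ and $\pm\sqrt{\mu_i\mu_j}$ for $i<j$.
\item Hence $\lambda_{\min}\ge-\max_{i<j}\sqrt{\mu_i\mu_j}\ge-\tfrac12\sum_i\mu_i$.
\item Concavity of $\lambda_{\min}$ and linearity of $\Gamma$ then extend the bound to mixtures.
\end{itemize}
I apply this with $X=A_1A_2$, $Y=B_1B_2$ and $S=S_F$, whose trace is $\Tr(Q_F\rho^{\otimes2})=t_F$. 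This gives $2S_F^{\Gamma}\succeq-t_F I$ on the full two-copy space.

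Since $P_\pm^B$ is real symmetric, $S_F^{\Gamma}=Q_F(\rho^{\otimes2})^{\Gamma}Q_F$ is supported in $\cF$. Compressing the inequality by $Q_F$ therefore yields $2S_F^{\Gamma}+t_FQ_F\succeq0$, which proves Eq.~\eqref{eq:tF-bound-main}.

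The purity formula is a direct trace computation. Expand $Q_F=\tfrac14(I+F_A)(I-F_B)$ and use the swap-trick identities
\[
\Tr(F_A\rho^{\otimes2})=\Tr\rho_A^2,\qquad \Tr(F_B\rho^{\otimes2})=\Tr\rho_B^2,\qquad \Tr(F_AF_B\rho^{\otimes2})=\Tr\rho^2 .
\]
This gives Eq.~\eqref{eq:tF-purities-main} immediately.

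Optimality is the main obstacle. One must exhibit, for each $d\ge3$, states for which $\lambda_{\min}\bigl(2S_F^{\Gamma}\bigr)=-t_F$ with $t_F>0$, or at least states along which the ratio tends to $-1$. Equality in the elementary bound requires the support of $S_F$ to consist of vectors in $\Sym^2\bbC^2\otimes\wedge^2\bbC^d$ whose two largest Schmidt coefficients across $A_1A_2|B_1B_2$ are equal and whose other coefficients vanish. In other words, one needs Schmidt-rank-two vectors with equal weights, all sharing a common extremal negative eigenvector.

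A single pure $\rho$ cannot produce this, since it has Schmidt rank at most $2$ and one checks that then $Q_F(\psi\otimes\psi)=0$. So I would take $\rho$ supported on $\bbC^2\otimes\mathrm{span}\{e_1,e_2,e_3\}$ and embed it trivially for larger $d$. The first attempt is a mixture $\rho=\tfrac12(\proj{\psi_1}+\proj{\psi_2})$ with, say, $\psi_1=\ket{0}\ket{e_1}$ and $\psi_2=(\ket{0}\ket{e_2}+\ket{1}\ket{e_3})/\sqrt2$. I would compute the cross terms $Q_F(\psi_k\otimes\psi_l)$ explicitly. These should produce a vector of the form $\ket{00}\ket{e_1\wedge e_2}+\ket{s}\ket{e_1\wedge e_3}$ (up to weights), where $\ket{s}$ is the normalized symmetric vector of $\Sym^2\bbC^2$.

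I would then tune the weights and the relative amplitudes so that $S_F$ becomes a multiple of this equal-weight rank-two vector. In that case its partial transpose has eigenvalue exactly $-t_F/2$. If exact saturation fails, I would settle for a one-parameter family in which the ratio $\lambda_{\min}(S_F^{\Gamma})/t_F$ converges to $-\tfrac12$, which still shows that no coefficient smaller than $t_F$ works.
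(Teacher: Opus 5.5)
Your proof of the operator inequality and of the purity formula is correct, and it is the paper's own argument. It uses the parity-block identity, positivity of the $\cE$ block because $\dim\wedge^2\bbC^2=1$, the universal bound $\lambda_{\min}(S^\Gamma)\ge-\tfrac12\Tr S$ applied to $S_F$ and then compressed to $\cF$, and the swap trick. You prove that bound exactly as the paper does, from the Schmidt-coefficient spectrum and superadditivity of $\lambda_{\min}$. One small misattribution: $Q_F(\psi\otimes\psi)=0$ holds for every $\psi$, not because of Schmidt rank. The reason is that $Q_F=Q_FP_-^{\cH}$ and $\psi\otimes\psi$ is invariant under $F_AF_B$. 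Equivalently, $t_F=0$ for every pure state because $\Tr\rho_A^2=\Tr\rho_B^2$.

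The gap is the optimality clause, which you leave as a plan. You exhibit no saturating state, and the one concrete candidate you propose does not saturate. For $\rho=w_1\proj{\psi_1}+w_2\proj{\psi_2}$ the diagonal terms vanish, so $S_F=2w_1w_2\,Q_F\proj{\psi_1\psi_2}Q_F$ is rank one. Take your $\psi_1=\ket0\ket{e_1}$, $\psi_2=(\ket0\ket{e_2}+\ket1\ket{e_3})/\sqrt2$, $w_1=w_2=\tfrac12$, and write $\ket{jk_-}=(\ket{e_je_k}-\ket{e_ke_j})/\sqrt2$. Then $Q_F\ket{\psi_1\psi_2}=\tfrac12\ket{00}\ket{12_-}+\tfrac1{2\sqrt2}\ket{s}\ket{13_-}$, whose Schmidt weights are unequal. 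Consequently $t_F=\tfrac3{16}$ while $\lambda_{\min}(2S_F^{\Gamma})=-\tfrac1{4\sqrt2}$, which is only $2\sqrt2/3$ of $-t_F$.

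The missing step is to actually produce equal weights, and it is a one-line fix: take $\psi_2=\tfrac1{\sqrt3}\ket0\ket{e_2}+\sqrt{2/3}\,\ket1\ket{e_3}$. Then $Q_F\ket{\psi_1\psi_2}=\tfrac1{\sqrt6}(\ket{00}\ket{12_-}+\ket{s}\ket{13_-})$, so $t_F=\tfrac16$. This matches the purity formula with $\Tr\rho^2=\tfrac12$, $\Tr\rho_A^2=\tfrac59$, $\Tr\rho_B^2=\tfrac7{18}$. Moreover $\lambda_{\min}(2S_F^{\Gamma})=-\tfrac16=-t_F$, with its eigenvector in $\cF$. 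Since the $\cE$ block is positive, $K+JKJ+c\,t_FQ_F$ then has a negative eigenvalue for every $c<1$. Embedding $\mathrm{span}\{e_1,e_2,e_3\}$ into $\bbC^d$ covers all $d\ge3$.

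The paper's example $\rho_\sharp=\tfrac12(\proj\psi+\proj\phi)$, with $\psi=(\ket{00}+\ket{11})/\sqrt2$ and $\phi=(\ket{00}+\ket{12})/\sqrt2$, is of exactly this type. It gives a rank-one $S_F$ with two equal Schmidt weights, $t_F=\tfrac18$, and $\lambda_{\min}(S_F^{\Gamma})=-\tfrac1{16}$. Your fallback of a family whose ratio tends to $-\tfrac12$ would also prove optimality, but you do not specify it and it is not needed.
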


The proof and a saturating $2\otimes3$ example are given in the SM.
If $\kappa_1\ge\cdots\ge\kappa_{\binom{2d}{2}}$ are the eigenvalues of
$K$, then
\begin{equation}
 \kappa_i+\kappa_{\binom{2d}{2}+1-i}\ge-t_F.
 \label{eq:higher-pairing-main}
\end{equation}
Thus $t_F$ gives the exact correction to the two-qubit eigenvalue-pairing
inequality.

\section{Discussion and outlook}
\label{sec:discussion}

Taken together, our two-qubit results connect the inverse eigenvalue problem directly to steering-ellipsoid geometry. The inequalities $q\le y$ and $qy\le xz$ correspond to the allowed ellipsoid-center region and the fixed-center volume bound, respectively; after canonical filtering, the second is equivalent to complete positivity of the associated phase-covariant qubit channel. Whether a comparable correspondence extends beyond two qubits remains
open. More generally, inverse-eigenvalue boundaries may reflect geometric constraints on conditional-state sets.

The two-moment problem separates two distinct issues: the spectral
ambiguity left by low-order PT moments and the constraints imposed by the
inverse eigenvalue theorem. Higher PT moments reduce the former, while the
latter determine whether a candidate eigenvalue list can occur as
$\operatorname{spec}(\rho^{\Gamma_B})$ for a density operator $\rho$.
This suggests seeking similarly simple geometric interpretations for higher-moment boundaries, possibly through oriented correlation invariants.

In $2\otimes d$, each negative eigenvalue of $\rho^{\Gamma_B}$ can be
isolated within a two-qubit subspace, but the resulting two-qubit
projections are not independent. Determining their joint consistency is
the main obstacle to a complete qubit--qudit inverse eigenvalue theorem.
Possible routes include multicopy symmetry constraints and higher exterior
powers. A broader question is which other positive maps admit comparably
explicit inverse eigenvalue characterizations.

\begin{acknowledgments}
This work is supported by the National Natural Science Foundation of China (Grants 12475020 and 92565111), Quantum Science and Technology-National Science and Technology Major Project (2021ZD0301701), and the National Key Research and Development Program of China (2023YFC2205802).
\end{acknowledgments}

\bibliography{references}

\clearpage
\onecolumngrid

\setcounter{section}{0}
\setcounter{subsection}{0}
\setcounter{equation}{0}
\setcounter{figure}{0}
\setcounter{table}{0}
\setcounter{theorem}{0}

\renewcommand{\thesection}{S\arabic{section}}
\renewcommand{\thesubsection}{S\arabic{section}.\arabic{subsection}}
\renewcommand{\theequation}{S\arabic{equation}}
\renewcommand{\thefigure}{S\arabic{figure}}
\renewcommand{\thetable}{S\arabic{table}}
\renewcommand{\thetheorem}{S\arabic{theorem}}

\begin{center}
{\large\bfseries Supplemental Material for}\\[0.5ex]
{\bfseries ``The Inverse Eigenvalue Problem for Partial Transposes of Two-Qubit States''}
\end{center}
\vspace{1em}

This Supplemental Material provides the detailed derivations supporting the
main text. Sections~\ref{supp:conventions}--\ref{supp:twoqubit} establish the
two-copy construction and the complete two-qubit inverse theorem.
Sections~\ref{supp:QSE-review}--\ref{supp:moment-QSE} contain the steering-
ellipsoid and two-moment analyses, while
Secs.~\ref{supp:compression}--\ref{supp:compatibility} give the
qubit--qudit extensions.

\section{Conventions and local-basis covariance}
\label{supp:conventions}

Let $A\simeq\bbC^m$, $B\simeq\bbC^n$, and fix an orthonormal basis
$\{\ket{j}_B\}$.  The partial transpose on $B$ is
\begin{equation}
 \Gamma_B=\id_A\otimes\tp_B,
 \qquad
 \left(\dyad{a}{a'}\otimes\dyad{b}{b'}\right)^{\Gamma_B}
 =\dyad{a}{a'}\otimes\dyad{b'}{b}.
 \label{eq:PT-convention-supp}
\end{equation}
For a local change of basis
\begin{equation}
 \rho'=(U_A\otimes U_B)\rho(U_A^\dagger\otimes U_B^\dagger),
 \label{eq:local-basis-change-supp}
\end{equation}
one has
\begin{equation}
 (\rho')^{\Gamma_B}
 =(U_A\otimes U_B^*)\rho^{\Gamma_B}
 (U_A^\dagger\otimes U_B^{\mathsf T}).
 \label{eq:PT-covariance-supp}
\end{equation}
Thus the PT spectrum is independent of the product basis used to define
$\Gamma_B$.  Also
$\rho^{\Gamma_A}=(\rho^{\Gamma_B})^{\mathsf T}$, so the two partial
transposes are isospectral.

For an arbitrary operator $B_0$ on the second subsystem,
\begin{equation}
 \left[(I_A\otimes B_0)\rho(I_A\otimes B_0^\dagger)\right]^{\Gamma_B}
 =(I_A\otimes B_0^*)\rho^{\Gamma_B}(I_A\otimes B_0^{\mathsf T}).
 \label{eq:filter-PT-general-supp}
\end{equation}
If $P=P^2=P^\dagger$ is an orthogonal projector and $B_0=P^{\mathsf T}$,
then
\begin{equation}
 \left[(I_A\otimes P^{\mathsf T})\rho(I_A\otimes P^{\mathsf T})\right]^{\Gamma_B}
 =(I_A\otimes P)\rho^{\Gamma_B}(I_A\otimes P).
 \label{eq:projector-filter-supp}
\end{equation}
This identity will implement the Schmidt-support compressions in
Sec.~\ref{supp:compression}.

\section{The two-copy exterior-square identity}
\label{supp:two-copy}

\subsection{Regrouping and local parity sectors}

Let $\cH=A\otimes B$.  We regroup the two-copy Hilbert space with the unitary
permutation
\begin{equation}
 \mathcal R:
 (A_1\otimes B_1)\otimes(A_2\otimes B_2)
 \longrightarrow
 (A_1\otimes A_2)\otimes(B_1\otimes B_2),
 \label{eq:regroup-map-supp}
\end{equation}
\begin{equation}
 \mathcal R(\ket{a_1,b_1}\otimes\ket{a_2,b_2})
 =\ket{a_1,a_2}\otimes\ket{b_1,b_2}.
 \label{eq:regroup-basis-supp}
\end{equation}
All operators below are written after this regrouping.  Let $F_A$ and $F_B$
exchange $A_1\leftrightarrow A_2$ and $B_1\leftrightarrow B_2$, respectively,
and set
\begin{equation}
 P_\pm^A=\frac{I\pm F_A}{2},
 \qquad
 P_\pm^B=\frac{I\pm F_B}{2}.
 \label{eq:local-parity-supp}
\end{equation}
The complete-copy swap is $F_\cH=F_AF_B$, so
\begin{equation}
 P_-^\cH=\frac{I-F_AF_B}{2}
 \label{eq:global-antisym-supp}
\end{equation}
projects onto $\wedge^2\cH$.  Define
\begin{equation}
 Q_E=P_-^A\otimes P_+^B,
 \qquad
 Q_F=P_+^A\otimes P_-^B.
 \label{eq:QE-QF-supp}
\end{equation}
A direct expansion gives
\begin{equation}
 Q_E+Q_F=P_-^\cH,
 \qquad Q_EQ_F=0.
 \label{eq:QE-QF-sum-supp}
\end{equation}
Hence
\begin{equation}
 \wedge^2(A\otimes B)
 =\cE\oplus\cF,
 \label{eq:skew-Cauchy-supp}
\end{equation}
with
\begin{equation}
 \cE=\wedge^2A\otimes\Sym^2B,
 \qquad
 \cF=\Sym^2A\otimes\wedge^2B.
 \label{eq:E-F-supp}
\end{equation}
Equation~\eqref{eq:skew-Cauchy-supp} is the degree-two skew Cauchy decomposition
\cite{FultonHarris1991}.  On $\wedge^2\cH$, the operator
\begin{equation}
 J=Q_E-Q_F
 \label{eq:J-supp}
\end{equation}
is Hermitian and unitary.

\subsection{Exterior spectrum and the parity-block identity}

Set $\tau=\rho^{\Gamma_B}$.  Since $\tau\otimes\tau$ commutes with the
complete-copy swap, it preserves $\wedge^2\cH$.  Define
\begin{equation}
 K=\wedge^2\tau
 =P_-^\cH(\tau\otimes\tau)P_-^\cH\big|_{\wedge^2\cH}.
 \label{eq:K-supp}
\end{equation}
If $\tau\ket{v_i}=\lambda_i\ket{v_i}$, then
\begin{equation}
 \frac{\ket{v_i}\otimes\ket{v_j}-\ket{v_j}\otimes\ket{v_i}}{\sqrt2}
 \label{eq:wedge-vector-supp}
\end{equation}
is an eigenvector of $K$ with eigenvalue $\lambda_i\lambda_j$ for $i<j$.
Thus
\begin{equation}
 \spec(K)=\{\lambda_i\lambda_j:i<j\}.
 \label{eq:K-spectrum-supp}
\end{equation}
This is the second multiplicative compound of $\tau$, whose eigenvalues
are pairwise products; the second additive compound instead has pairwise sums.

Define the positive semidefinite operators
\begin{equation}
 S_E=Q_E\rho^{\otimes2}Q_E,
 \qquad
 S_F=Q_F\rho^{\otimes2}Q_F.
 \label{eq:SE-SF-supp}
\end{equation}
Relative to $\cE\oplus\cF$, write
\begin{equation}
 K=\begin{pmatrix}K_{EE}&K_{EF}\\ K_{FE}&K_{FF}\end{pmatrix},
 \qquad
 J=\begin{pmatrix}I_\cE&0\\0&-I_\cF\end{pmatrix}.
 \label{eq:block-KJ-supp}
\end{equation}
Then
\begin{equation}
 K+JKJ=2\begin{pmatrix}K_{EE}&0\\0&K_{FF}\end{pmatrix}.
 \label{eq:J-pinching-supp}
\end{equation}
Under the regrouping in Eq.~\eqref{eq:regroup-map-supp},
\begin{equation}
 \tau\otimes\tau=(\rho^{\otimes2})^{\Gamma_{B_1B_2}}.
 \label{eq:two-copy-PT-supp}
\end{equation}
The projectors $P_\pm^B$ are real and symmetric in the product basis, so
compression by $Q_E$ or $Q_F$ commutes with partial transposition on
$B_1B_2$.  Consequently,
\begin{equation}
 Q_EKQ_E=S_E^{\Gamma_{B_1B_2}},
 \qquad
 Q_FKQ_F=S_F^{\Gamma_{B_1B_2}}.
 \label{eq:compressed-PT-supp}
\end{equation}
Substitution into Eq.~\eqref{eq:J-pinching-supp} proves
\begin{equation}
 K+JKJ
=2\left(S_E^{\Gamma_{B_1B_2}}\oplus S_F^{\Gamma_{B_1B_2}}\right).
 \label{eq:block-identity-supp}
\end{equation}

\subsection{The qubit simplification}

Take $A\simeq B\simeq\bbC^2$.  Let $\ket{a_-}$ and $\ket{b_-}$ be normalized
generators of $\wedge^2A$ and $\wedge^2B$.  Since these spaces are one
dimensional,
\begin{equation}
 S_E=\proj{a_-}\otimes X_B,
 \qquad
 S_F=Y_A\otimes\proj{b_-}
 \label{eq:factor-SE-SF-supp}
\end{equation}
for $X_B\succeq0$ on $\Sym^2B$ and $Y_A\succeq0$ on $\Sym^2A$.
Since transposition preserves positivity,
$X_B^{\mathsf T}\succeq0$, while
$(\proj{b_-})^{\mathsf T}=\proj{b_-}$ in the chosen product basis. Hence
\begin{equation}
 S_E^{\Gamma_{B_1B_2}}\succeq0,
 \qquad
 S_F^{\Gamma_{B_1B_2}}\succeq0,
 \label{eq:both-positive-supp}
\end{equation}
which proves
\begin{equation}
 K+JKJ\succeq0.
 \label{eq:KJK-positive-supp}
\end{equation}

Let $\kappa_1\ge\cdots\ge\kappa_6$ be the eigenvalues of $K$.  From
$K\succeq-JKJ$ and unitary equivalence of $K$ and $JKJ$, ordered-eigenvalue
monotonicity gives
\begin{equation}
 \kappa_i\ge-\kappa_{7-i},
 \qquad i=1,2,3.
 \label{eq:kappa-pairing-supp}
\end{equation}
This is the Weyl/Loewner step used in the main text \cite{Bhatia1997}.

\section{Complete two-qubit inverse eigenvalue theorem}
\label{supp:twoqubit}

\subsection{Necessity}

Let
\begin{equation}
 \spec^{\desc}(\rho^{\Gamma_B})=(x,y,z,-q),
 \qquad x\ge y\ge z\ge0,
 \quad q>0.
 \label{eq:ordered-NPT-supp}
\end{equation}
Equation~\eqref{eq:K-spectrum-supp} gives
\begin{equation}
 \spec^{\desc}(K)
 =(xy,xz,yz,-qz,-qy,-qx).
 \label{eq:wedge-eigenvalues-supp}
\end{equation}
The three pair sums in Eq.~\eqref{eq:kappa-pairing-supp} are
\begin{align}
 xy-qx&=x(y-q),
 \label{eq:pair-one-supp}\\
 xz-qy&=xz-qy,
 \label{eq:pair-two-supp}\\
 yz-qz&=z(y-q).
 \label{eq:pair-three-supp}
\end{align}
Since $x>0$, the independent constraints are
\begin{equation}
 q\le y,
 \qquad
 qy\le xz.
 \label{eq:inverse-conditions-supp}
\end{equation}

\subsection{Constructive sufficiency and rank boundaries}

Assume Eq.~\eqref{eq:inverse-conditions-supp} and
$x+y+z-q=1$. In the product basis
$\{\ket{00},\ket{01},\ket{10},\ket{11}\}$, define
\begin{equation}
 \rho_X=
 \begin{pmatrix}
 0&0&0&0\\
 0&x&\sqrt{qy}&0\\
 0&\sqrt{qy}&z&0\\
 0&0&0&y-q
 \end{pmatrix}.
 \label{eq:X-state-supp}
\end{equation}
Its nontrivial $2\times2$ principal block has determinant $xz-qy\ge0$,
and the remaining nonzero scalar is $y-q\ge0$.  Thus $\rho_X\succeq0$ and
$\Tr\rho_X=1$.  Its partial transpose is
\begin{equation}
 \rho_X^{\Gamma_B}=
 \begin{pmatrix}
 0&0&0&\sqrt{qy}\\
 0&x&0&0\\
 0&0&z&0\\
 \sqrt{qy}&0&0&y-q
 \end{pmatrix}.
 \label{eq:X-state-PT-supp}
\end{equation}
The corner block has trace $y-q$, determinant $-qy$, and characteristic
polynomial
\begin{equation}
 \mu^2-(y-q)\mu-qy=(\mu-y)(\mu+q).
 \label{eq:corner-polynomial-supp}
\end{equation}
Therefore
\begin{equation}
 \spec^{\desc}(\rho_X^{\Gamma_B})=(x,y,z,-q).
 \label{eq:X-spectrum-supp}
\end{equation}
Transformations to $X$ states preserving the ordinary state spectrum and
selected entanglement measures are studied in Ref.~\cite{Mendonca2014}.

The $X$ state in Eq.~\eqref{eq:X-state-supp} has rank at most three. More precisely,
\begin{equation}
\rank\rho_X=
\begin{cases}
3,&y>q\ \text{and}\ xz>qy,\\
2,&\text{exactly one equality holds},\\
1,&y=q\ \text{and}\ xz=qy.
\end{cases}
\label{eq:X-ranks-supp}
\end{equation}
This rank statement concerns the specific $X$ state in Eq.~\eqref{eq:X-state-supp}; it is not a constraint on every state whose partial transpose has the same spectrum. If both inequalities are saturated, then
$\Tr[(\rho^{\Gamma_B})^2]=\Tr\rho^2=1$; hence every two-qubit state whose partial transpose has this spectrum is pure.

\subsection{Block-positive comparison}

Johnston and Patterson proved that a trace-one block-positive two-qubit
operator with ordered spectrum $(x,y,z,-q)$ exists if and only if
\begin{equation}
 q\le y,
 \qquad
 q^2\le xz
 \label{eq:block-positive-supp}
\end{equation}
\cite[Theorem~3]{JohnstonPatterson2018}.  The spectrum
\begin{equation}
 \boldsymbol{\lambda}_*=
 \left(\frac{21}{25},\frac15,\frac1{100},-\frac1{20}\right)
 \label{eq:rational-spectrum-supp}
\end{equation}
satisfies
\begin{equation}
 q^2=\frac1{400}\le\frac{21}{2500}=xz
 <\frac1{100}=qy.
 \label{eq:rational-separation-supp}
\end{equation}
It is therefore realizable by a block-positive operator but cannot arise as
the partial transpose of a two-qubit state.

\subsection{Projected spectral geometry}

For fixed $y,z$, the trace condition gives $x=1+q-y-z$.  The two inverse
conditions imply
\begin{equation}
 q\le q_{\max}(y,z),
 \label{eq:qmax-def-supp}
\end{equation}
where
\begin{equation}
 q_{\max}(y,z)=
 \begin{cases}
 \min\!\left\{y,\dfrac{z(1-y-z)}{y-z}\right\},&y>z,\\[2mm]
 y,&y=z.
 \end{cases}
 \label{eq:qmax-supp}
\end{equation}
The ordering condition $x\ge y$ also gives the lower restriction
\begin{equation}
 q\ge q_{\min}(y,z)=\max\{0,2y+z-1\}.
 \label{eq:qmin-supp}
\end{equation}
The closure of the projected NPT domain is
$0\le z\le y\le1/2$, and the active upper-bound
branches meet on
\begin{equation}
 y^2+z^2=z.
 \label{eq:qmax-switch-supp}
\end{equation}

\section{Steering-ellipsoid formulas used in this work}
\label{supp:QSE-review}

This section recalls the QSE formulas used below
\cite{Jevtic2014QSE,Milne2014ExtremalQSE}; the corresponding geometric
consequences of the inverse theorem are derived in
Sec.~\ref{supp:geometric-inverse}.

\subsection{Conditional Bloch vectors}

Write a two-qubit state as
\begin{equation}
 \rho=\frac14\left[
 I\otimes I+\bm r\cdot\bm\sigma\otimes I
 +I\otimes\bm s\cdot\bm\sigma
 +\sum_{i,j=1}^{3}M_{ij}\sigma_i\otimes\sigma_j
 \right].
 \label{eq:Fano-supp}
\end{equation}
The reduced Bloch vectors and correlation matrix are
\begin{equation}
 r_i=\Tr[\rho(\sigma_i\otimes I)],
 \quad
 s_j=\Tr[\rho(I\otimes\sigma_j)],
 \quad
 M_{ij}=\Tr[\rho(\sigma_i\otimes\sigma_j)].
 \label{eq:Fano-components-supp}
\end{equation}
Let Bob's positive effect be
\begin{equation}
 E_{\bm n}=\frac12(I+\bm n\cdot\bm\sigma),
 \qquad \abs{\bm n}\le1.
 \label{eq:Bob-effect-supp}
\end{equation}
The unnormalized conditional state on Alice is
\begin{equation}
 \widetilde\rho_A(\bm n)
 =\Tr_B[\rho(I\otimes E_{\bm n})]
 =\frac14\left[(1+\bm s\cdot\bm n)I
 +(\bm r+M\bm n)\cdot\bm\sigma\right].
 \label{eq:conditional-state-supp}
\end{equation}
Its probability is $(1+\bm s\cdot\bm n)/2$, and the normalized Bloch vector is
\begin{equation}
 \bm a(\bm n)=\frac{\bm r+M\bm n}{1+\bm s\cdot\bm n}.
 \label{eq:conditional-Bloch-supp}
\end{equation}
Jevtic et al. showed that the image of the Bloch ball under
Eq.~\eqref{eq:conditional-Bloch-supp} is an ellipsoid, Alice's quantum
steering ellipsoid $\cE_A$ \cite{Jevtic2014QSE}.  This usage of ``steering''
describes a conditional-state set; it does not by itself assert
Einstein--Podolsky--Rosen steerability.

\subsection{Center, shape, and volume}

Define the connected correlation tensor
\begin{equation}
 G=M-\bm r\bm s^{\mathsf T}.
 \label{eq:connected-tensor-supp}
\end{equation}
For $\abs{\bm s}<1$, the center of $\cE_A$ is
\begin{equation}
 \bm c_A=\frac{\bm r-M\bm s}{1-\abs{\bm s}^2},
 \label{eq:QSE-center-supp}
\end{equation}
and its shape matrix is
\begin{equation}
 Q_A=\frac{1}{1-\abs{\bm s}^2}
 G\left(I+\frac{\bm s\bm s^{\mathsf T}}{1-\abs{\bm s}^2}\right)G^{\mathsf T}.
 \label{eq:QSE-shape-supp}
\end{equation}
The semiaxes are the square roots of the eigenvalues of $Q_A$.  Taking
determinants gives
\begin{equation}
 \det Q_A=\frac{(\det G)^2}{(1-\abs{\bm s}^2)^4},
 \label{eq:det-Q-supp}
\end{equation}
and therefore
\begin{equation}
 V_A=\frac{4\pi}{3}\sqrt{\det Q_A}
 =\frac{4\pi}{3}\frac{\abs{\det G}}{(1-\abs{\bm s}^2)^2}.
 \label{eq:QSE-volume-supp}
\end{equation}
The determinant follows from the matrix determinant lemma,
\begin{equation}
 \det\!\left(I+\frac{\bm s\bm s^{\mathsf T}}{1-\abs{\bm s}^2}\right)
 =\frac{1}{1-\abs{\bm s}^2}.
 \label{eq:matrix-det-lemma-supp}
\end{equation}

\subsection{Canonical filtering}

An invertible filter on Bob reparametrizes Bob's positive effects and hence
does not change Alice's normalized conditional-state set
\cite{Verstraete2001Filtering,Jevtic2014QSE}.  Choosing a filter proportional
to $(2\rho_B)^{-1/2}$ maps Bob's reduced state to $I/2$.  In the filtered
canonical state, $\bm s=0$, the Alice Bloch vector equals the ellipsoid center,
and
\begin{equation}
 \bm a(\bm n)=\bm c_A+\widetilde M\bm n,
 \qquad
 Q_A=\widetilde M\widetilde M^{\mathsf T}.
 \label{eq:canonical-QSE-supp}
\end{equation}
Thus the singular values of $\widetilde M$ are the semiaxes.  A proper local
unitary on each qubit acts as a proper rotation on the corresponding Bloch
space.  Signed singular-value decomposition therefore leaves one discrete
orientation label,
\begin{equation}
 \chi=\sgn\det\widetilde M,
 \label{eq:QSE-chirality-supp}
\end{equation}
called the chirality of the ellipsoid in Ref.~\cite{Milne2014ExtremalQSE}.

\subsection{The fixed-center maximal-volume steering ellipsoid}

Milne et al. derived necessary and sufficient conditions for canonical QSE
data to correspond to a two-qubit state and optimized the ellipsoid volume at
fixed center \cite{Milne2014ExtremalQSE,Milne2015Corrigendum}. If the center has
distance $c=\abs{\bm c_A}$ from the Bloch-ball origin, the maximal-volume
steering ellipsoid allowed at that center has semiaxes
\begin{equation}
 \sqrt{1-c},\qquad\sqrt{1-c},\qquad1-c,
 \label{eq:Milne-axes-supp}
\end{equation}
and volume
\begin{equation}
 V_{\max}(c)=\frac{4\pi}{3}(1-c)^2.
 \label{eq:Milne-volume-supp}
\end{equation}
The maximizing canonical state is a rank-two $X$ state locally equivalent to
the Choi state of an amplitude-damping channel.  Milne et al. also proved that it maximizes concurrence among states with
that center.

\section{Geometric and channel form of the inverse theorem}
\label{supp:geometric-inverse}

This section applies the preceding QSE formulas to the $X$ state used to
establish sufficiency in Eq.~\eqref{eq:X-state-supp}.

\subsection{Bloch data and connected tensor of the $X$-state representative}

For the state in Eq.~\eqref{eq:X-state-supp}, direct traces with the Pauli matrices
give
\begin{equation}
 \bm r_X=(0,0,2x-1)^{\mathsf T},
 \qquad
 \bm s_X=(0,0,2z-1)^{\mathsf T},
 \label{eq:X-local-Bloch-supp}
\end{equation}
\begin{equation}
 M_X=\diag\!\left(2\sqrt{qy},2\sqrt{qy},2(y-q)-1\right).
 \label{eq:X-correlation-supp}
\end{equation}
The connected tensor is
\begin{equation}
 G_X=M_X-\bm r_X\bm s_X^{\mathsf T}
 =\diag\!\left(2\sqrt{qy},2\sqrt{qy},-4xz\right).
 \label{eq:X-G-supp}
\end{equation}
The last entry follows from the trace relation. For a candidate eigenvalue
list satisfying the inverse-eigenvalue inequalities, $x,y,z,q>0$, and hence
\begin{equation}
 \det G_X=-16qxyz<0,
 \label{eq:X-detG-supp}
\end{equation}
so this full-dimensional $X$ state has left-handed chirality in the convention of Ref.~\cite{Milne2014ExtremalQSE}.

\subsection{Center and semiaxes}

For an ordered candidate eigenvalue list satisfying the inverse-eigenvalue
inequalities, $q>0$, $q\le y$, and $qy\le xz$ imply
$0<z<1$ and $x>0$, so the following denominators are nonzero. Substituting
Eqs.~\eqref{eq:X-local-Bloch-supp} and \eqref{eq:X-correlation-supp} into
Eq.~\eqref{eq:QSE-center-supp} yields
\begin{equation}
 \bm c_A=(0,0,-c)^{\mathsf T},
 \qquad
 c=\frac{y-q}{1-z}.
 \label{eq:c-definition-supp}
\end{equation}
Since all matrices are diagonal, Eq.~\eqref{eq:QSE-shape-supp} gives
\begin{equation}
 s_\perp^2=\frac{qy}{z(1-z)},
 \label{eq:transverse-axis-supp}
\end{equation}
for the two transverse semiaxes and
\begin{equation}
 s_\parallel=\frac{x}{1-z}
 \label{eq:longitudinal-axis-first-supp}
\end{equation}
for the longitudinal semiaxis.  The trace condition gives
\begin{equation}
 x=1-z-(y-q)=(1-z)(1-c),
 \label{eq:x-c-supp}
\end{equation}
so
\begin{equation}
 s_\parallel=1-c.
 \label{eq:longitudinal-axis-supp}
\end{equation}
The ellipsoid is therefore tangent to the south pole because its lower axial
endpoint is $-c-(1-c)=-1$.

Define
\begin{equation}
 \nu=\frac{qy}{xz}.
 \label{eq:nu-supp}
\end{equation}
Using Eq.~\eqref{eq:x-c-supp},
\begin{equation}
 s_\perp^2=\nu(1-c).
 \label{eq:sperp-nu-supp}
\end{equation}
The QSE volume is
\begin{equation}
 V_X=\frac{4\pi}{3}\nu(1-c)^2,
 \label{eq:VX-supp}
\end{equation}
and hence
\begin{equation}
 \frac{V_X}{V_{\max}(c)}=\nu=\frac{qy}{xz}.
 \label{eq:volume-ratio-supp}
\end{equation}
Moreover,
\begin{equation}
 q\le y\quad\Longleftrightarrow\quad c\ge0,
 \label{eq:first-QSE-equivalence-supp}
\end{equation}
while $x\ge0$ gives $c\le1$, and
\begin{equation}
 qy\le xz\quad\Longleftrightarrow\quad \nu\le1.
 \label{eq:second-QSE-equivalence-supp}
\end{equation}
Equations~\eqref{eq:first-QSE-equivalence-supp} and
\eqref{eq:second-QSE-equivalence-supp} prove the geometric corollary in the
main text. They apply to the canonical $X$ state constructed in the inverse theorem; the steering ellipsoid need not be the same for all states whose partial transposes share the same spectrum.

\subsection{Spectral coordinates $(c,\nu,\beta)$}

Set
\begin{equation}
 \beta=1-z.
 \label{eq:beta-supp}
\end{equation}
Equations~\eqref{eq:c-definition-supp} and \eqref{eq:x-c-supp} imply
\begin{equation}
 z=1-\beta,
 \qquad
 x=\beta(1-c),
 \qquad
 y=q+c\beta.
 \label{eq:spectral-cnb-supp}
\end{equation}
The definition of $\nu$ becomes
\begin{equation}
 q(q+c\beta)=\nu\beta(1-c)(1-\beta).
 \label{eq:q-cnb-equation-supp}
\end{equation}
The nonnegative solution is
\begin{equation}
 q(c,\nu,\beta)=
 \frac{\sqrt{c^2\beta^2+4\nu\beta(1-c)(1-\beta)}-c\beta}{2}.
 \label{eq:q-cnb-supp}
\end{equation}
Together with Eq.~\eqref{eq:spectral-cnb-supp}, this reconstructs the PT
spectrum.  The ordered chart also requires
\begin{equation}
 x\ge y\ge z\ge0;
 \label{eq:ordering-cnb-supp}
\end{equation}
not every point of the cube $0\le c,\nu,\beta\le1$ obeys these inequalities.
Within the ordered chart, $c$ and $\nu$ specify the QSE and $\beta$ specifies
Bob's marginal,
\begin{equation}
 \rho_B=\diag(1-\beta,\beta).
 \label{eq:rhoB-beta-supp}
\end{equation}
Thus the three spectral degrees of freedom split into two geometric
coordinates and one local-filter coordinate.

\subsection{Canonical filtered state and channel decomposition}

For $0<\beta<1$, Bob's whitening filter is
\begin{equation}
 F_B=(2\rho_B)^{-1/2}.
 \label{eq:whitening-filter-supp}
\end{equation}
Since $\Tr(F_B^2\rho_B)=1$, the filtered state is
\begin{equation}
 \widetilde\rho(c,\nu)=
 \begin{pmatrix}
 0&0&0&0\\
 0&\frac{1-c}{2}&\frac{\sqrt{\nu(1-c)}}{2}&0\\
 0&\frac{\sqrt{\nu(1-c)}}{2}&\frac12&0\\
 0&0&0&\frac c2
 \end{pmatrix}.
 \label{eq:canonical-X-supp}
\end{equation}
It is independent of $\beta$, as required by QSE invariance under Bob-side
invertible filtering.

Let $\ket{\Psi^+}=(\ket{01}+\ket{10})/\sqrt2$.  Define a south-pole
amplitude-damping channel $\mathcal A_c$ by the Kraus operators
\begin{equation}
 A_0=\begin{pmatrix}\sqrt{1-c}&0\\0&1\end{pmatrix},
 \qquad
 A_1=\sqrt c\ket{1}\!\bra{0},
 \label{eq:AD-Kraus-supp}
\end{equation}
and a dephasing channel $\mathcal D_{\sqrt\nu}$ that multiplies off-diagonal
matrix elements by $\sqrt\nu$.  Direct application gives
\begin{equation}
 \widetilde\rho(c,\nu)
 =\left[(\mathcal D_{\sqrt\nu}\circ\mathcal A_c)\otimes\id\right]
 \proj{\Psi^+}.
 \label{eq:Choi-decomp-supp}
\end{equation}
The associated Bloch map is
\begin{equation}
 (r_x,r_y,r_z)\longmapsto
 \left(\sqrt{\nu(1-c)}r_x,
 \sqrt{\nu(1-c)}r_y,
 (1-c)r_z-c\right).
 \label{eq:channel-Bloch-supp}
\end{equation}
The central $2\times2$ block of
Eq.~\eqref{eq:canonical-X-supp} has determinant
\begin{equation}
 \frac{(1-c)(1-\nu)}{4}.
 \label{eq:Choi-det-supp}
\end{equation}
Thus complete positivity on this slice is equivalent to $0\le c\le1$ and
$0\le\nu\le1$.  The inequality $qy\le xz$ is exactly the remaining CP
condition after fixing the pole-tangent affine geometry.

For the $X$ state in Eq.~\eqref{eq:canonical-X-supp}, Wootters'
concurrence formula \cite{Wootters1998} gives
\begin{equation}
 C(\widetilde\rho)=\sqrt{\nu(1-c)}.
 \label{eq:canonical-concurrence-supp}
\end{equation}
Milne et al. proved that the maximal concurrence at fixed center is
$C_{\max}(c)=\sqrt{1-c}$ \cite{Milne2014ExtremalQSE}.  Hence
\begin{equation}
 \nu=\frac{V_X}{V_{\max}(c)}
 =\left[\frac{C(\widetilde\rho)}{C_{\max}(c)}\right]^2.
 \label{eq:nu-two-ratios-supp}
\end{equation}
Equation~\eqref{eq:nu-two-ratios-supp} holds for the canonical $X$ state, but not necessarily for other states whose partial transposes have the same spectrum.

\section{Exact negativity extrema from two PT moments}
\label{supp:moment-range}

Let
\begin{equation}
 u=p_2=\Tr[(\rho^{\Gamma_B})^2],
 \qquad
 v=p_3=\Tr[(\rho^{\Gamma_B})^3].
 \label{eq:uv-supp}
\end{equation}
This section proves the extremum theorem stated in the main text and gives an
explicit finite algebraic procedure for computing the minimum and maximum. Earlier
work analyzed two-qubit PT-moment regions and visualization schemes
\cite{Zhang2022,Zhang2023}.  Our optimization keeps a separate inverse step:
a root multiset is retained only when it is realized by a two-qubit density operator.

\subsection{Newton identities and the one-parameter quartic}

Let $\lambda_1,\ldots,\lambda_4$ be the eigenvalues of $\rho^{\Gamma_B}$ and let $e_k$ denote
the elementary symmetric polynomials.  Since $p_1=e_1=1$, Newton identities
give
\begin{equation}
 e_2=\frac{1-u}{2},
 \qquad
 e_3=\frac{1-3u+2v}{6}.
 \label{eq:e2-e3-supp}
\end{equation}
The only coefficient not fixed by $(u,v)$ is
$h=e_4=\det(\rho^{\Gamma_B})$.  Every spectrum with the prescribed
$(u,v)$ is therefore the root multiset of
\begin{equation}
 \chi_h(t)=t^4-t^3+e_2t^2-e_3t+h.
 \label{eq:quartic-h-supp}
\end{equation}
Accordingly, $p_2,p_3,p_4$ determine the four eigenvalues, consistent with
the moment-based reconstruction of two-qubit negativity in
Ref.~\cite{Bartkiewicz2015}.

If $-q$ is a negative root, then
\begin{equation}
 \chi_h(t)=(t+q)f_q(t),
 \label{eq:quartic-factor-supp}
\end{equation}
where
\begin{equation}
 f_q(t)=t^3-(1+q)t^2+A_qt-B_q,
 \label{eq:fq-supp}
\end{equation}
\begin{equation}
 A_q=\frac{1+2q+2q^2-u}{2},
 \label{eq:Aq-supp}
\end{equation}
\begin{equation}
 B_q=\frac{1-3u+2v+3q(1-u)+6q^2+6q^3}{6}.
 \label{eq:Bq-supp}
\end{equation}
The roots of $f_q$ are $x,y,z$.

For a two-qubit density operator, $1/4\le u\le1$.  Hence $A_q>0$ for $q>0$.
The cubic has three positive real roots if and only if
\begin{equation}
 B_q>0,
 \qquad
 \Delta_q\ge0.
 \label{eq:scalar-feasibility-supp}
\end{equation}
Its closure with nonnegative roots is obtained by replacing $B_q>0$ with
$B_q\ge0$, where
\begin{equation}
 \begin{split}
 \Delta_q={}&(1+q)^2A_q^2-4A_q^3-4(1+q)^3B_q\\
 &-27B_q^2+18(1+q)A_qB_q
 \end{split}
 \label{eq:cubic-discriminant-supp}
\end{equation}
is the cubic discriminant.  To see that these conditions exclude two negative
roots, suppose that the roots are $-a,-b,c$ with $a,b,c>0$.  Since their sum
is $1+q>0$, one has $c>a+b$.  Their pair sum then satisfies
$ab-c(a+b)<ab-(a+b)^2<0$, contradicting $A_q>0$.

\subsection{Connectedness of the scalar fiber}

Write
\begin{equation}
 \chi_h(t)=g(t)+h,
 \qquad
 g(t)=t^4-t^3+e_2t^2-e_3t.
 \label{eq:vertical-shift-supp}
\end{equation}
Changing $h$ translates a fixed quartic vertically.  If $g'$ has fewer than
three real critical points, no open interval of $h$ produces four distinct
real roots.  Otherwise let $r_1<r_2<r_3$ be the critical points; $r_1,r_3$
are minima and $r_2$ is a maximum.  A horizontal line intersects $g$ four
times exactly when its height lies between $g(r_2)$ and the larger of the two
minimum values.  Hence the values of $h$ for which $\chi_h$ has four real
roots form one closed interval.  The boundary values correspond to a double root.

For a density matrix, $u\le1$.  A trace-one four-real-root spectrum cannot
have three negative roots: if the roots were $r,-a,-b,-d$ with
$a,b,d>0$, then $r=1+a+b+d>1$ and
$u=r^2+a^2+b^2+d^2>1$.  Thus the negative-$h$ part of the four-root interval
has one negative root and three positive roots.

At that negative root,
\begin{equation}
 \chi_h(-q)=0.
 \label{eq:root-equation-h-supp}
\end{equation}
Implicit differentiation gives
\begin{equation}
 \frac{dh}{dq}=\chi_h'(-q)
 =(-q-x)(-q-y)(-q-z)<0.
 \label{eq:h-q-monotone-supp}
\end{equation}
Therefore the candidate values of $q$ consistent with $(u,v)$ form an
interval. The inverse-eigenvalue inequalities can remove at most one segment
from each end.

\subsection{Inverse-eigenvalue slacks and their monotonicity}

Define
\begin{equation}
 \alpha=y-q,
 \qquad
 \delta=xz-qy.
 \label{eq:slacks-supp}
\end{equation}
Then
\begin{equation}
 y=q+\alpha,
 \qquad
 x+z=1-\alpha,
 \qquad
 xz=q(q+\alpha)+\delta.
 \label{eq:symmetric-slacks-supp}
\end{equation}
Using $x^2+z^2=(x+z)^2-2xz$ gives
\begin{equation}
 u=1-2\alpha+2\alpha^2-2\delta.
 \label{eq:p2-slacks-supp}
\end{equation}
Similarly,
\begin{equation}
 x^3+z^3=(x+z)^3-3xz(x+z)
 \label{eq:cube-identity-supp}
\end{equation}
yields
\begin{equation}
 \begin{aligned}
 v={}&1-3\alpha+3\alpha^2-3(1-\alpha)\delta\\
 &-3(1-2\alpha)q(q+\alpha).
 \end{aligned}
 \label{eq:p3-slacks-supp}
\end{equation}
At fixed $u$,
\begin{equation}
 \delta(\alpha)=\frac{1-2\alpha+2\alpha^2-u}{2},
 \label{eq:delta-alpha-supp}
\end{equation}
so $d\delta/d\alpha=2\alpha-1$.

Let $F(\alpha,q;u,v)$ be the right-hand side of
Eq.~\eqref{eq:p3-slacks-supp}, after inserting
Eq.~\eqref{eq:delta-alpha-supp}, minus $v$.  Direct differentiation gives
\begin{equation}
 \frac{\partial F}{\partial q}
 =-3(1-2\alpha)(2q+\alpha).
 \label{eq:Fq-supp}
\end{equation}
Define
\begin{equation}
 \Omega=(x-y)(y-z).
 \label{eq:Omega-supp}
\end{equation}
Using Eq.~\eqref{eq:symmetric-slacks-supp},
\begin{equation}
 \Omega=(q+\alpha)(1-2q-2\alpha)-\delta,
 \label{eq:Omega-expanded-supp}
\end{equation}
and a second direct differentiation gives
\begin{equation}
 \frac{\partial F}{\partial\alpha}=-3\Omega.
 \label{eq:Falpha-supp}
\end{equation}
For a strictly ordered spectrum, $\Omega>0$.  Also
$x+z=1-\alpha\ge y=q+\alpha$ implies
\begin{equation}
 \alpha\le\frac{1-q}{2}<\frac12.
 \label{eq:alpha-half-supp}
\end{equation}
The implicit-function theorem therefore gives
\begin{equation}
 \frac{dq}{d\alpha}
 =-\frac{\Omega}{(1-2\alpha)(2q+\alpha)}<0.
 \label{eq:q-alpha-monotonic-supp}
\end{equation}
Since $d\delta/d\alpha<0$, it follows that
\begin{equation}
 \frac{d\delta}{dq}>0.
 \label{eq:ddelta-dq-supp}
\end{equation}
The inequalities extend to repeated-root boundaries by continuity.  Thus the second inverse-eigenvalue inequality $\delta\ge0$ can cut only
the lower end of the candidate $q$ interval, while $\alpha\ge0$ can cut only
the upper end. If the closure of the interval corresponding to NPT states reaches $q=0$, the
overall lower extremum is instead set by a PPT state consistent with $(u,v)$.
Otherwise the two inverse-eigenvalue boundaries and the two positive-root
collisions give all nonzero extrema stated in the main text.

\subsection{Positive-root collision branches}

At a collision of positive roots, a repeated root $t$ satisfies
$\chi_h(t)=\chi_h'(t)=0$.  Since $h$ does not appear in the derivative,
\begin{equation}
 4t^3-3t^2+(1-u)t-\frac{1-3u+2v}{6}=0.
 \label{eq:collision-cubic-supp}
\end{equation}
Write the remaining positive root as $s$ and the negative root as $-q$.
The trace and second moment satisfy
\begin{equation}
 2t+s-q=1,
 \qquad
 2t^2+s^2+q^2=u.
 \label{eq:collision-two-equations-supp}
\end{equation}
Since $s-q=1-2t$, one obtains
\begin{equation}
 (s+q)^2=2u-1+4t-8t^2=:D_t.
 \label{eq:Dt-supp}
\end{equation}
For $D_t\ge0$,
\begin{equation}
 q(t)=t-\frac12+\frac12\sqrt{D_t},
 \qquad
 s(t)=\frac12-t+\frac12\sqrt{D_t}.
 \label{eq:q-s-collision-supp}
\end{equation}
The lower collision branch
\begin{equation}
 (x,y,z,-q)=(t,t,s(t),-q(t))
 \label{eq:xy-collision-supp}
\end{equation}
is the lower collision when
\begin{equation}
 0<q(t)\le s(t)\le t.
 \label{eq:xy-admissibility-supp}
\end{equation}
The upper collision branch
\begin{equation}
 (x,y,z,-q)=(s(t),t,t,-q(t))
 \label{eq:yz-collision-supp}
\end{equation}
is the upper collision when
\begin{equation}
 0<q(t)\le t\le s(t).
 \label{eq:yz-admissibility-supp}
\end{equation}
The inverse inequalities follow automatically from these orderings at the
collision points.

\subsection{The wall $qy=xz$}

Set $\delta=0$.  Equation~\eqref{eq:p2-slacks-supp} becomes
\begin{equation}
 u=1-2\alpha+2\alpha^2.
 \label{eq:u-delta-zero-supp}
\end{equation}
The branch satisfying the ordering conditions has $\alpha<1/2$, so for $u>1/2$,
\begin{equation}
 d=\sqrt{2u-1},
 \qquad
 \alpha_\delta=\frac{1-d}{2}.
 \label{eq:alpha-delta-wall-supp}
\end{equation}
Equation~\eqref{eq:p3-slacks-supp} gives
\begin{equation}
 q(q+\alpha_\delta)
 =W_\delta
 :=\frac{3u-1-2v}{6d}.
 \label{eq:Wdelta-supp}
\end{equation}
Thus
\begin{equation}
 q_\delta=
 \frac{-\alpha_\delta+
 \sqrt{\alpha_\delta^2+4W_\delta}}{2}.
 \label{eq:qdelta-supp}
\end{equation}
The remaining data are
\begin{equation}
 y=q_\delta+\alpha_\delta,
 \qquad
 x+z=1-\alpha_\delta,
 \qquad
 xz=W_\delta.
 \label{eq:delta-wall-roots-supp}
\end{equation}
At $\delta=0$,
\begin{equation}
 \Omega=(q+\alpha)(1-2q-2\alpha).
 \label{eq:Omega-delta-zero-supp}
\end{equation}
Ordering is therefore equivalent to $q+\alpha\le1/2$.  Using
Eq.~\eqref{eq:Wdelta-supp}, this reduces to $v\ge1/4$, while $q_\delta>0$
reduces to $v<(3u-1)/2$.  A realizable solution with $q>0$ on the inverse-eigenvalue boundary $qy=xz$ exists exactly in the region
\begin{equation}
 u>\frac12,
 \qquad
 \frac14\le v<\frac{3u-1}{2}.
 \label{eq:qdelta-region-supp}
\end{equation}

\subsection{The wall $q=y$}

Set $\alpha=0$.  Equation~\eqref{eq:p2-slacks-supp} gives
\begin{equation}
 \delta=\frac{1-u}{2}.
 \label{eq:delta-alpha-zero-supp}
\end{equation}
Equation~\eqref{eq:p3-slacks-supp} becomes
\begin{equation}
 v=\frac{3u-1}{2}-3q^2,
 \label{eq:v-alpha-zero-supp}
\end{equation}
so
\begin{equation}
 q_\alpha=\sqrt{\frac{3u-1-2v}{6}}.
 \label{eq:qalpha-supp}
\end{equation}
At $\alpha=0$,
\begin{equation}
 \Omega=q(1-2q)-\frac{1-u}{2}.
 \label{eq:Omega-alpha-zero-supp}
\end{equation}
The direct admissibility test is therefore
\begin{equation}
 q_\alpha>0,
 \qquad
 q_\alpha(1-2q_\alpha)\ge\frac{1-u}{2}.
 \label{eq:qalpha-test-supp}
\end{equation}
Equivalently,
\begin{equation}
 u\ge\frac34,
 \qquad
 \eta_-(u)\le v\le\eta_+(u),
 \label{eq:qalpha-region-supp}
\end{equation}
where
\begin{equation}
 \eta_\pm(u)=\frac{6u-1\pm3\sqrt{4u-3}}{8},
 \label{eq:eta-pm-supp}
\end{equation}
with the boundary $q_\alpha=0$ excluded.

\subsection{Moment pairs consistent with PPT states}

For a PPT two-qubit state, $\operatorname{spec}(\rho^{\Gamma_B})$ is a
probability vector $\boldsymbol{\lambda}=(\lambda_1,\ldots,\lambda_4)$;
conversely, every such vector is realized by a product-basis diagonal state.  At fixed
\begin{equation}
 \sum_i\lambda_i=1,
 \qquad
 \sum_i\lambda_i^2=u,
 \label{eq:probability-moment-constraints-supp}
\end{equation}
we optimize $v=\sum_i\lambda_i^3$.  In the interior of a fixed support, the
Karush--Kuhn--Tucker equations read
\begin{equation}
 3\lambda_i^2-2\mu\lambda_i-\nu_0=0.
 \label{eq:KKT-probability-supp}
\end{equation}
Thus the nonzero components take at most two distinct values.

The maximum is attained by one large component and three equal small
components.  Solving
$a+3b=1$ and $a^2+3b^2=u$ gives
\begin{equation}
 U_4(u)=
 \frac{3(6u-1)+\sqrt3(4u-1)^{3/2}}{24}.
 \label{eq:U4-supp}
\end{equation}
The minimum uses the largest support compatible with $u$.  On the interval
$1/m\le u\le1/(m-1)$, its spectrum has one smaller component
\begin{equation}
 a_m=\frac{1-\sqrt{(m-1)(mu-1)}}{m}
 \label{eq:am-supp}
\end{equation}
and $m-1$ equal components
\begin{equation}
 b_m=\frac{1-a_m}{m-1}.
 \label{eq:bm-supp}
\end{equation}
Substitution gives
\begin{equation}
 L_4(u)=
 \begin{cases}
 \dfrac{3(6u-1)-\sqrt3(4u-1)^{3/2}}{24},
 &\frac14\le u\le\frac13,\\[2mm]
 \dfrac{2(9u-2)-\sqrt2(3u-1)^{3/2}}{18},
 &\frac13\le u\le\frac12,\\[2mm]
 \dfrac{3u-1}{2},
 &\frac12\le u\le1.
 \end{cases}
 \label{eq:L4-supp}
\end{equation}
Therefore a PPT state consistent with $(u,v)$ exists exactly when
\begin{equation}
 \frac14\le u\le1,
 \qquad
 L_4(u)\le v\le U_4(u).
 \label{eq:PPT-region-supp}
\end{equation}
Because every probability vector is realized by a product-basis diagonal
state, Eq.~\eqref{eq:PPT-region-supp} is also the exact condition for
$\cN_{\min}(u,v)=0$.

\subsection{Finite extremum algorithm}

For a moment pair $(u,v)$ arising from a two-qubit state:

\begin{enumerate}
 \item Solve the cubic in Eq.~\eqref{eq:collision-cubic-supp}.  Use
 Eqs.~\eqref{eq:q-s-collision-supp}--\eqref{eq:yz-admissibility-supp} to
 retain the $x=y$ and $y=z$ collision values satisfying the ordering conditions.
 \item If Eq.~\eqref{eq:qdelta-region-supp} holds, compute $q_\delta$ from
 Eq.~\eqref{eq:qdelta-supp}.
 \item Compute $q_\alpha$ from Eq.~\eqref{eq:qalpha-supp} and retain it if
 Eq.~\eqref{eq:qalpha-test-supp} holds.
 \item The largest retained value is the NPT maximum.  If
 Eq.~\eqref{eq:PPT-region-supp} holds, set the overall minimum to zero;
 otherwise, the smallest retained value is the positive NPT minimum.
\end{enumerate}

The connectedness and monotonicity proved above guarantee that no other
extrema over two-qubit states occur.  The procedure uses one cubic equation, square roots,
and finitely many ordering tests; it does not optimize over density matrices.

\subsection{Rational example}

For the spectrum in Eq.~\eqref{eq:rational-spectrum-supp},
\begin{equation}
 u=\frac{3741}{5000},
 \qquad
 v=\frac{30029}{50000}.
 \label{eq:rational-uv-supp}
\end{equation}
Since $u>1/2$ and
$v<(3u-1)/2=0.6223$, no PPT state has these moments. The lower endpoint lies on $\delta=0$ and is
\begin{equation}
 q_{\min}=0.0515639944891697\ldots.
 \label{eq:rational-Nmin-supp}
\end{equation}
The corresponding spectrum is
\begin{equation}
 \begin{split}
 x&=0.840045670710435\ldots,\\
 y&=0.199285695412999\ldots,\\
 z&=0.0122326283657361\ldots.
 \end{split}
 \label{eq:rational-lower-spectrum-supp}
\end{equation}
It satisfies $qy=xz$. The upper endpoint is the $y=z$ collision,
\begin{equation}
 q_{\max}=0.0900140267775626\ldots,
 \label{eq:rational-Nmax-supp}
\end{equation}
\begin{equation}
 x=0.842263158813723\ldots,
 \qquad
 y=z=0.123875433981920\ldots.
 \label{eq:rational-upper-spectrum-supp}
\end{equation}
The candidate value $q=1/20$ on the same fixed-moment curve lies below the
allowed negativity interval and is excluded precisely by $qy\le xz$.

\section{QSE geometry of the moment fibers}
\label{supp:moment-QSE}

The extremum theorem in Sec.~\ref{supp:moment-range} can be expressed entirely
in the geometric coordinates $(c,\nu,\beta)$ introduced in
Sec.~\ref{supp:geometric-inverse}.

\subsection{Moments in geometric coordinates}

Define
\begin{equation}
 A_2(c,\beta)
 =(1-\beta)^2+\beta^2\left[(1-c)^2+c^2\right],
 \label{eq:A2-cb-supp}
\end{equation}
\begin{equation}
 A_3(c,\beta)
 =(1-\beta)^3+\beta^3\left[(1-c)^3+c^3\right].
 \label{eq:A3-cb-supp}
\end{equation}
Using Eqs.~\eqref{eq:spectral-cnb-supp} and
\eqref{eq:q-cnb-equation-supp}, direct substitution gives
\begin{equation}
 u=A_2(c,\beta)+2\nu\beta(1-\beta)(1-c),
 \label{eq:p2-cnb-supp}
\end{equation}
\begin{equation}
 v=A_3(c,\beta)+3\nu c\beta^2(1-\beta)(1-c).
 \label{eq:p3-cnb-supp}
\end{equation}
The terms $A_2$ and $A_3$ are the moments at $\nu=0$, where the transverse
semiaxes vanish and the QSE becomes a radial needle tangent to the south
pole.  The terms proportional to $\nu$ are the contribution of transverse
QSE inflation.

For fixed $(c,\beta)$, define
\begin{equation}
 \bm m_0(c,\beta)=\bigl(A_2(c,\beta),A_3(c,\beta)\bigr),
 \label{eq:m0-supp}
\end{equation}
\begin{equation}
 \bm m_1(c,\beta)=\bm m_0(c,\beta)
 +\left(2B,3c\beta B\right),
 \qquad
 B=\beta(1-\beta)(1-c).
 \label{eq:m1-supp}
\end{equation}
Equations~\eqref{eq:p2-cnb-supp} and \eqref{eq:p3-cnb-supp} become
\begin{equation}
 (u,v)=(1-\nu)\bm m_0(c,\beta)+\nu\bm m_1(c,\beta).
 \label{eq:moment-line-segment-supp}
\end{equation}
Thus increasing the normalized QSE volume from $\nu=0$ to $\nu=1$ traces a
straight segment in the $(p_2,p_3)$ plane.  The segment starts at the radial
needle and ends at the fixed-center maximal-volume amplitude-damping QSE.
The parameter $\nu$ is simultaneously the affine coordinate along this
moment segment and the normalized QSE volume.

Eliminating $\nu$ gives
\begin{equation}
 v-A_3(c,\beta)
 =\frac32c\beta\left[u-A_2(c,\beta)\right].
 \label{eq:moment-line-slope-supp}
\end{equation}
For fixed $(u,v)$, Eq.~\eqref{eq:moment-line-slope-supp} is a cubic
equation in $\beta$:
\begin{equation}
 \begin{split}
 0={}&3c(c^2-2c+2)\beta^3-3(1+c)\beta^2\\
 &+\left[3+\frac{3c}{2}(1-u)\right]\beta+v-1.
 \end{split}
 \label{eq:beta-cubic-supp}
\end{equation}
For every solution satisfying the ordering and inverse-eigenvalue inequalities,
\begin{equation}
 \nu(c,\beta)
 =\frac{u-A_2(c,\beta)}{2\beta(1-\beta)(1-c)},
 \label{eq:nu-from-moments-supp}
\end{equation}
\begin{equation}
 q(c,\beta)
 =\frac{\sqrt{c^2\beta^2+2[u-A_2(c,\beta)]}-c\beta}{2}.
 \label{eq:q-from-moments-supp}
\end{equation}
The fixed-moment set is therefore an algebraic curve in the QSE--local-filter
bundle. Its realizable part obeys $0\le c,\nu\le1$ together with the spectral
ordering conditions.

\subsection{Geometric form of the two inverse-eigenvalue slacks}

The two slacks in Eq.~\eqref{eq:slacks-supp} become
\begin{equation}
 \alpha=c\beta,
 \label{eq:alpha-geometric-supp}
\end{equation}
\begin{equation}
 \delta=\beta(1-\beta)(1-c)(1-\nu).
 \label{eq:delta-geometric-supp}
\end{equation}
The first combines the ellipsoid-center coordinate with Bob's local-filter
coordinate.  The second is the unused complete-positivity margin of the
canonical channel.  Equations~\eqref{eq:alpha-geometric-supp} and
\eqref{eq:delta-geometric-supp} identify the two inverse-eigenvalue boundaries as
\begin{equation}
\alpha=0\quad\Longleftrightarrow\quad c=0,
 \label{eq:upper-QSE-wall-supp}
\end{equation}
\begin{equation}
 \delta=0\quad\Longleftrightarrow\quad \nu=1
 \label{eq:lower-QSE-wall-supp}
\end{equation}
for nondegenerate $0<\beta<1$ and $c<1$.

At fixed $u$, Eq.~\eqref{eq:p2-slacks-supp} is the parabola
\begin{equation}
 \delta=\frac{1-2\alpha+2\alpha^2-u}{2}.
 \label{eq:slack-parabola-supp}
\end{equation}
The realizable portion lies in the first quadrant. Along the fixed-moment
curve, increasing $q$ moves from smaller $\delta$ and larger $\alpha$ toward
larger $\delta$ and smaller $\alpha$. A minimum on $\delta=0$ is the maximal-volume/complete-positivity boundary
$\nu=1$, while a maximum on $\alpha=0$ is the centered boundary $c=0$. Positive-root collisions
can terminate the fixed-moment curve before either axis is reached.

\subsection{Geometric interpretation of the negativity extrema}

At $\nu=1$, the canonical channel has no dephasing beyond amplitude damping.
For fixed moments, this boundary can determine the lower endpoint of the
allowed negativity interval: reducing $q$ further would require more transverse correlation
than the fixed-center CP envelope permits.  This does not mean that QSE volume
is a negativity monotone.  The center and the local-filter coordinate change
along a fixed-moment curve.

At $c=0$, the PT corner block
\begin{equation}
 \begin{pmatrix}0&\sqrt{qy}\\ \sqrt{qy}&y-q\end{pmatrix}
 \label{eq:PT-corner-QSE-supp}
\end{equation}
has vanishing diagonal buffer $y-q=c\beta$.  The canonical channel is pure
dephasing and the QSE is a centered spheroid with semiaxes
$(\sqrt\nu,\sqrt\nu,1)$.  This boundary can determine the upper endpoint of the allowed
negativity interval.  If $c=0$ and $\nu=1$, the QSE is the entire Bloch ball and the
representative is maximally entangled after canonical filtering.

For the rational example in Sec.~\ref{supp:moment-range}, the minimum has
\begin{equation}
 (c,\nu,\beta)
 =(0.149551\ldots,1,0.987767\ldots),
 \label{eq:rational-lower-cnb-supp}
\end{equation}
whereas the maximum, attained at the $y=z$ collision, has approximately
\begin{equation}
 (c,\nu,\beta)
 =(0.038649\ldots,0.106872\ldots,0.876125\ldots).
 \label{eq:rational-upper-cnb-supp}
\end{equation}
The normalized QSE volume decreases along this particular fixed-moment curve while the
negativity increases, which illustrates that the geometry encodes constraints on the canonical
channel rather than an entanglement ordering.

\section{Two-qubit compressions in qubit--qudit systems}
\label{supp:compression}

Let $\rho\in\cD(\bbC^2\otimes\bbC^d)$ and
$\tau=\rho^{\Gamma_B}$.  Choose a normalized eigenvector associated with a negative eigenvalue
\begin{equation}
 \tau\ket{\psi_-}=-q\ket{\psi_-},
 \qquad q>0.
 \label{eq:negative-eigenvector-supp}
\end{equation}
Because the first subsystem is a qubit,
$\SR(\ket{\psi_-})\le2$.  Schmidt rank one is impossible: if
$\ket{\psi_-}=\ket{a,b}$, then
\begin{equation}
 -q=\bra{a,b}\rho^{\Gamma_B}\ket{a,b}
 =\bra{a,b^*}\rho\ket{a,b^*}\ge0.
 \label{eq:no-product-negative-supp}
\end{equation}
Hence
\begin{equation}
 \SR(\ket{\psi_-})=2.
 \label{eq:SR-two-supp}
\end{equation}

Let
\begin{equation}
 P_B=\supp\Tr_A\proj{\psi_-}
 \label{eq:PB-supp}
\end{equation}
be the rank-two Schmidt-support projector on $B$.  Define
\begin{equation}
 \widetilde\tau=(I_2\otimes P_B)\tau(I_2\otimes P_B),
 \label{eq:compression-supp}
\end{equation}
\begin{equation}
 \widetilde\rho=(I_2\otimes P_B^{\mathsf T})
 \rho(I_2\otimes P_B^{\mathsf T}).
 \label{eq:physical-filter-supp}
\end{equation}
Equation~\eqref{eq:projector-filter-supp} gives
\begin{equation}
 \widetilde\tau=\widetilde\rho^{\Gamma_B},
 \qquad
 \widetilde\rho\succeq0.
 \label{eq:compressed-positive-source-supp}
\end{equation}
Since $(I_2\otimes P_B)\ket{\psi_-}=\ket{\psi_-}$, the eigenvalue $-q$ is
retained. The partial transpose of a two-qubit state has at most one negative eigenvalue,
so
\begin{equation}
 \spec^{\desc}(\widetilde\tau)=(a,b,c,-q),
 \qquad a\ge b\ge c\ge0.
 \label{eq:compressed-spectrum-supp}
\end{equation}
After normalization, the two-qubit inverse theorem applies; homogeneity gives
\begin{equation}
 q\le b,
 \qquad qb\le ac.
 \label{eq:compressed-conditions-supp}
\end{equation}

Let
\begin{equation}
 p=\Tr\widetilde\rho=\Tr\widetilde\tau=a+b+c-q,
 \qquad
 \sigma=\frac{\widetilde\rho}{p}.
 \label{eq:p-sigma-supp}
\end{equation}
The unique negative eigenvalue of $\sigma^{\Gamma_B}$ is $-q/p$, hence
\begin{equation}
 p\cN(\sigma)=q.
 \label{eq:pNq-supp}
\end{equation}
The second inverse inequality gives $c\ge qb/a$.  Therefore
\begin{equation}
 \begin{split}
 p-2b&=a-b+c-q\\
 &\ge a-b+\frac{qb}{a}-q\\
 &=(a-b)\left(1-\frac qa\right)\ge0,
 \end{split}
 \label{eq:p-2b-supp}
\end{equation}
where $a\ge b\ge q$ was used.  Thus
\begin{equation}
 p\ge2b\ge2q,
 \qquad
 \cN(\sigma)=\frac qp\le\frac12.
 \label{eq:filter-tradeoff-supp}
\end{equation}

Let the full spectrum of $\tau$ be
$\lambda_1\ge\cdots\ge\lambda_{2d}$.  Cauchy interlacing for a
four-dimensional compression gives
\begin{equation}
 \lambda_i\ge\mu_i\ge\lambda_{i+2d-4},
 \qquad i=1,\ldots,4,
 \label{eq:interlacing-supp}
\end{equation}
where $(\mu_1,\mu_2,\mu_3,\mu_4)=(a,b,c,-q)$.  In particular,
\begin{equation}
 a\le\lambda_1,
 \quad b\le\lambda_2,
 \quad c\le\lambda_3,
 \quad b\ge\lambda_{2d-2}.
 \label{eq:interlacing-consequences-supp}
\end{equation}
Combining these bounds with Eq.~\eqref{eq:compressed-conditions-supp} yields
\begin{equation}
 q\le\lambda_2,
 \label{eq:q-lambda2-supp}
\end{equation}
\begin{equation}
 q\max\{q,\lambda_{2d-2}\}\le\lambda_1\lambda_3.
 \label{eq:global-compression-ineq-supp}
\end{equation}
If $-q_1,\ldots,-q_r$ are the negative eigenvalues, then $r\le d-1$
\cite{Rana2013}; applying Eq.~\eqref{eq:q-lambda2-supp} to each gives
\begin{equation}
 \cN(\rho)=\sum_{j=1}^{r}q_j\le(d-1)\lambda_2.
 \label{eq:negativity-lambda2-supp}
\end{equation}
Finally,
\begin{equation}
 p\le\lambda_1+\lambda_2+\lambda_3-q
 \label{eq:p-upper-supp}
\end{equation}
leads to
\begin{equation}
 \cN(\sigma)\ge
 \frac{q}{\lambda_1+\lambda_2+\lambda_3-q}.
 \label{eq:conditional-neg-lower-supp}
\end{equation}

A state is one-copy distillable if and only if there exists a Schmidt-rank-two
vector $\ket\phi$ with
$\bra\phi\rho^{\Gamma_B}\ket\phi<0$
\cite{HorodeckiDistill1998,Dur2000,HorodeckiRMP2009}.  The eigenvector in
Eq.~\eqref{eq:negative-eigenvector-supp} supplies such a witness, and the filter in
Eq.~\eqref{eq:physical-filter-supp} produces an NPT two-qubit state.  The identity
$p\cN(\sigma)=q$ concerns negativity.  It does not identify the output
negativity with distillable entanglement or with a single-copy Bell-pair
yield.

\section{Pairwise orthogonal Schmidt supports}
\label{supp:orthogonal}

Suppose that the negative spectral subspace of $\tau=\rho^{\Gamma_B}$ admits
an orthonormal eigenbasis
\begin{equation}
 \tau\ket{\psi_j}=-q_j\ket{\psi_j},
 \qquad q_j>0,
 \quad j=1,\ldots,r,
 \label{eq:negative-basis-supp}
\end{equation}
whose Bob-side Schmidt-support projectors
\begin{equation}
 P_j=\supp\Tr_A\proj{\psi_j}
 \label{eq:Pj-supp}
\end{equation}
are pairwise orthogonal:
\begin{equation}
 P_jP_k=0
 \qquad(j\ne k).
 \label{eq:orthogonal-supports-supp}
\end{equation}
Set
\begin{equation}
 P_0=I_B-\sum_{j=1}^{r}P_j.
 \label{eq:P0-supp}
\end{equation}
The transposed projectors
$\{P_0^{\mathsf T},P_1^{\mathsf T},\ldots,P_r^{\mathsf T}\}$ form a
projective measurement on Bob.

For outcome $j\ge1$, define
\begin{equation}
 \widetilde\rho_j=(I_2\otimes P_j^{\mathsf T})
 \rho(I_2\otimes P_j^{\mathsf T}),
 \qquad
 p_j=\Tr\widetilde\rho_j,
 \qquad
 \sigma_j=\frac{\widetilde\rho_j}{p_j}.
 \label{eq:branches-supp}
\end{equation}
The compression retains $-q_j$ and removes every other eigendirection associated
with a negative eigenvalue,
so Sec.~\ref{supp:compression} gives
\begin{equation}
 p_j\cN(\sigma_j)=q_j.
 \label{eq:branch-negativity-supp}
\end{equation}
For the complementary outcome, write
\begin{equation}
 \tau=\tau_+-\sum_{j=1}^{r}q_j\proj{\psi_j},
 \qquad \tau_+\succeq0.
 \label{eq:spectral-decomp-negative-supp}
\end{equation}
Since $(I_2\otimes P_0)\ket{\psi_j}=0$ for every $j$,
\begin{equation}
 (I_2\otimes P_0)\tau(I_2\otimes P_0)
 =(I_2\otimes P_0)\tau_+(I_2\otimes P_0)\succeq0.
 \label{eq:complement-PPT-supp}
\end{equation}
The complementary postmeasurement state is PPT.  Summing the branches gives
\begin{equation}
 \sum_{j=1}^{r}p_j\cN(\sigma_j)
 =\sum_{j=1}^{r}q_j=\cN(\rho).
 \label{eq:average-neg-preserved-supp}
\end{equation}
Negativity is nonincreasing on average under selective LOCC
\cite{VidalWerner2002}; the measurement therefore saturates this
monotonicity inequality.

A transparent example is
\begin{equation}
 \rho=\sum_{j=1}^{r}w_j\proj{\Phi_j},
 \qquad
 \ket{\Phi_j}=\frac{\ket0\ket{u_j}+\ket1\ket{v_j}}{\sqrt2},
 \label{eq:Bell-sector-family-supp}
\end{equation}
where the two-dimensional spaces
$\operatorname{span}\{\ket{u_j},\ket{v_j}\}$ are mutually orthogonal and
$\sum_jw_j=1$.  Outcome $j$ occurs with probability $w_j$, and a conditional
local isometry maps the branch to a standard Bell state.  The sector label is
locally readable, so one Bell pair is obtained deterministically.  The state
contains one Bell pair in a direct-sum encoding, not $r$ Bell pairs in tensor
product.

The hypothesis in Eq.~\eqref{eq:orthogonal-supports-supp} concerns an
eigenbasis of the negative spectral subspace, not the complete block structure
of $\rho$.  Coherent
matrix elements between different support sectors may be present.  Therefore Eq.~\eqref{eq:average-neg-preserved-supp} does not imply an
additive formula for one-way distillable entanglement. The condition used here
is weaker than a locally orthogonal flag decomposition, for which convex
entanglement monotones obey an exact average-over-flags identity
\cite{HorodeckiFlags2005}.

\section{Sharp two-copy correction bound}
\label{supp:correction}

\subsection{A universal partial-transpose lower bound}

\begin{lemma}
For every positive semidefinite bipartite operator $S$,
\begin{equation}
 \lambda_{\min}(S^\Gamma)\ge-\frac{\Tr S}{2}.
 \label{eq:universal-PT-lower-supp}
\end{equation}
\end{lemma}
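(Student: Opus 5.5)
By convexity, it is enough to prove the bound for rank-one $S$. The spectral decomposition gives $S=\sum_k\proj{\phi_k}$ with $\sum_k\braket{\phi_k|\phi_k}=\Tr S$, and partial transposition is linear, so $S^\Gamma=\sum_k(\proj{\phi_k})^\Gamma$. The minimum eigenvalue is superadditive, $\lambda_{\min}(X+Y)\ge\lambda_{\min}(X)+\lambda_{\min}(Y)$. Hence the claim for each $\proj{\phi_k}$, namely $\lambda_{\min}\ge-\braket{\phi_k|\phi_k}/2$, sums to the claim for $S$. By homogeneity we may also take $\ket{\phi}$ normalized.

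For a normalized vector, write the Schmidt decomposition $\ket\phi=\sum_i\sqrt{p_i}\ket{a_i}\ket{b_i}$. By the local-basis covariance in Eq.~\eqref{eq:PT-covariance-supp}, we may take $\{\ket{b_i}\}$ to be the computational basis used for $\Gamma$ without changing the spectrum. Then
\begin{equation}
(\proj\phi)^\Gamma=\sum_{i,j}\sqrt{p_ip_j}\,\dyad{a_i}{a_j}\otimes\dyad{b_j}{b_i}.
\end{equation}
This operator decomposes into invariant blocks. Each one-dimensional block $\ket{a_i b_i}$ carries eigenvalue $p_i\ge0$. For each pair $i<j$ there is a two-dimensional block $\operatorname{span}\{\ket{a_ib_j},\ket{a_jb_i}\}$ whose matrix is $\sqrt{p_ip_j}\,\sigma_x$, with eigenvalues $\pm\sqrt{p_ip_j}$. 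All remaining directions have eigenvalue zero. The smallest eigenvalue is therefore $-\max_{i<j}\sqrt{p_ip_j}$, or $0$ if there is only one Schmidt coefficient. By AM--GM,
\begin{equation}
\sqrt{p_ip_j}\le\frac{p_i+p_j}{2}\le\frac{1}{2},
\end{equation}
which gives $\lambda_{\min}\ge-1/2$ for normalized $\ket\phi$, as required.

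The only step that needs care is the explicit block diagonalization of $(\proj\phi)^\Gamma$, in particular checking that the $2\times2$ blocks for different pairs act on mutually orthogonal subspaces. This holds because the product vectors $\ket{a_ib_j}$ are orthonormal. The convexity reduction is routine.
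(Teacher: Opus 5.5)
Your proposal is correct and follows essentially the same route as the paper. You reduce to rank-one $S$ via a positive rank-one decomposition and superadditivity of $\lambda_{\min}$, read off the spectrum $\{s_j^2\}\cup\{\pm s_js_k\}$ of the partial transpose of a Schmidt-decomposed vector, and bound $s_js_k\le(s_j^2+s_k^2)/2\le\Tr S/2$ by AM--GM; the only differences (normalizing by homogeneity and explicitly verifying the $2\times2$ block structure, which the paper just asserts) are cosmetic.
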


\begin{proof}
First let $S=\proj\chi$, where the possibly unnormalized vector has Schmidt
decomposition
\begin{equation}
 \ket\chi=\sum_j s_j\ket{j,j},
 \qquad s_j\ge0.
 \label{eq:chi-Schmidt-supp}
\end{equation}
The spectrum of $(\proj\chi)^\Gamma$ consists of $s_j^2$ and the pairs
$\pm s_js_k$ for $j<k$.  Every negative eigenvalue obeys
\begin{equation}
 -s_js_k\ge-\frac{s_j^2+s_k^2}{2}
 \ge-\frac12\sum_\ell s_\ell^2
 =-\frac{\Tr S}{2}.
 \label{eq:rank-one-PT-lower-supp}
\end{equation}
For general $S\succeq0$, choose a positive rank-one decomposition
$S=\sum_r\proj{\chi_r}$.  Superadditivity of the smallest eigenvalue gives
\begin{equation}
 \begin{split}
 \lambda_{\min}(S^\Gamma)
 &\ge\sum_r\lambda_{\min}[(\proj{\chi_r})^\Gamma]\\
 &\ge-\frac12\sum_r\Tr\proj{\chi_r}
 =-\frac{\Tr S}{2}.
 \end{split}
 \label{eq:mixed-PT-lower-supp}
\end{equation}
\end{proof}

\subsection{Correction theorem and swap probability}

For $A\simeq\bbC^2$, $B\simeq\bbC^d$, define
\begin{equation}
 t_F=\Tr S_F=\Tr(Q_F\rho^{\otimes2}).
 \label{eq:tF-def-supp}
\end{equation}
The $\cE$ block in Eq.~\eqref{eq:block-identity-supp} remains positive after
partial transposition because $\wedge^2A$ is one dimensional.  The lemma gives
\begin{equation}
 S_F^{\Gamma_{B_1B_2}}\succeq-\frac{t_F}{2}I_\cF.
 \label{eq:SF-lower-supp}
\end{equation}
Together with Eq.~\eqref{eq:block-identity-supp}, this proves
\begin{equation}
 K+JKJ+t_FQ_F\succeq0.
 \label{eq:tF-bound-supp}
\end{equation}

Expanding
\begin{equation}
 Q_F=P_+^A\otimes P_-^B
 =\frac14(I+F_A)(I-F_B)
 \label{eq:QF-expand-supp}
\end{equation}
and applying the swap trick,
\begin{equation}
 \Tr(F_A\rho^{\otimes2})=\Tr\rho_A^2,
 \quad
 \Tr(F_B\rho^{\otimes2})=\Tr\rho_B^2,
 \quad
 \Tr(F_AF_B\rho^{\otimes2})=\Tr\rho^2,
 \label{eq:swap-trick-supp}
\end{equation}
we obtain
\begin{equation}
 t_F=
 \frac{1+\Tr\rho_A^2-\Tr\rho_B^2-\Tr\rho^2}{4}.
 \label{eq:tF-purities-supp}
\end{equation}
Thus $t_F$ is the probability of the joint local-swap outcome
$(F_A,F_B)=(+1,-1)$ and can be estimated by measuring the two local swap
parities on $\rho^{\otimes2}$. Related two-copy local-parity measurements
are used in Ref.~\cite{MintertBuchleitner2007}.

Let
\begin{equation}
 D_2=\binom{2d}{2}
 \label{eq:D2-supp}
\end{equation}
and order the eigenvalues of $K$ as
$\kappa_1\ge\cdots\ge\kappa_{D_2}$.  Since $0\preceq Q_F\preceq I$,
Eq.~\eqref{eq:tF-bound-supp} implies
\begin{equation}
 K+t_FI\succeq-JKJ.
 \label{eq:K-tF-Loewner-supp}
\end{equation}
Ordered-eigenvalue monotonicity gives
\begin{equation}
 \kappa_i+\kappa_{D_2+1-i}\ge-t_F,
 \qquad 1\le i\le D_2.
 \label{eq:higher-pairing-supp}
\end{equation}
For $d=2$, the one-dimensionality of both local antisymmetric sectors gives
the stronger zero-correction inequality.

\subsection{A saturating $2\otimes3$ state}

Consider
\begin{equation}
 \rho_\sharp=\frac12(\proj\psi+\proj\phi),
 \label{eq:sharp-state-supp}
\end{equation}
where
\begin{equation}
 \ket\psi=\frac{\ket{00}+\ket{11}}{\sqrt2},
 \qquad
 \ket\phi=\frac{\ket{00}+\ket{12}}{\sqrt2}.
 \label{eq:sharp-vectors-supp}
\end{equation}
Direct calculation gives
\begin{equation}
 \Tr\rho_\sharp^2=\frac58,
 \qquad
 \Tr(\rho_\sharp)_A^2=\frac12,
 \qquad
 \Tr(\rho_\sharp)_B^2=\frac38,
 \label{eq:sharp-purities-supp}
\end{equation}
so
\begin{equation}
 t_F=\frac18.
 \label{eq:sharp-tF-supp}
\end{equation}

Use the ordered orthonormal basis of
$\cF=\Sym^2A\otimes\wedge^2B$
\begin{equation}
 \{\ket{00}_A,\ket{s}_A,\ket{11}_A\}
 \otimes
 \{\ket{01_-}_B,\ket{02_-}_B,\ket{12_-}_B\},
 \label{eq:F-basis-supp}
\end{equation}
where
\begin{equation}
 \ket{s}_A=\frac{\ket{01}+\ket{10}}{\sqrt2},
 \qquad
 \ket{jk_-}_B=\frac{\ket{jk}-\ket{kj}}{\sqrt2}.
 \label{eq:F-basis-vectors-supp}
\end{equation}
In this basis,
\begingroup
\setlength{\arraycolsep}{3pt}
\begin{equation}
 S_F^{\Gamma_{B_1B_2}}
 =\frac1{32}
 \begin{pmatrix}
 0&0&0&0&0&0&0&0&0\\
 0&0&0&0&0&0&0&0&0\\
 0&0&0&0&0&0&0&0&0\\
 0&0&0&1&-1&0&0&0&0\\
 0&0&0&-1&1&0&0&0&0\\
 0&0&0&0&0&0&-\sqrt2&\sqrt2&0\\
 0&0&0&0&0&-\sqrt2&0&0&0\\
 0&0&0&0&0&\sqrt2&0&0&0\\
 0&0&0&0&0&0&0&0&2
 \end{pmatrix}.
 \label{eq:SF-sharp-matrix-supp}
\end{equation}
\endgroup
Its spectrum is
\begin{equation}
 \spec(S_F^{\Gamma_{B_1B_2}})
 =\left\{-\frac1{16},
 \underbrace{\frac1{16},\frac1{16},\frac1{16}}_{3},
 \underbrace{0,\ldots,0}_{5}\right\}.
 \label{eq:SF-sharp-spectrum-supp}
\end{equation}
The $\cE$ block is positive, so
\begin{equation}
 \lambda_{\min}(K+JKJ)
 =2\lambda_{\min}(S_F^{\Gamma_{B_1B_2}})
 =-\frac18=-t_F.
 \label{eq:sharp-saturation-supp}
\end{equation}
No coefficient smaller than one can therefore replace the coefficient of
$t_FQ_F$ uniformly.  Embedding the same state into a three-dimensional
subspace of $B\simeq\bbC^d$ proves sharpness for every $d\ge3$.

\subsection{State-dependent correction and its scope}

For a fixed state, define the smallest correction required on the $\cF$ block
by
\begin{equation}
 \delta_F(\rho)=2\left[-\lambda_{\min}
 (S_F^{\Gamma_{B_1B_2}})\right]_+.
 \label{eq:deltaF-supp}
\end{equation}
The universal lemma implies
\begin{equation}
 0\le\delta_F(\rho)\le t_F.
 \label{eq:deltaF-bound-supp}
\end{equation}
Thus the prefactor in Eq.~\eqref{eq:tF-bound-supp} is universally sharp,
but $t_F$ need not be the minimal correction for each state.

To separate $t_F$ from Schmidt-support overlap, take the two-sector
$2\otimes4$ Bell family
\begin{equation}
 \rho=\frac12\proj{\Phi_1}+\frac12\proj{\Phi_2},
 \label{eq:2x4-state-supp}
\end{equation}
\begin{equation}
 \ket{\Phi_1}=\frac{\ket{0,0}+\ket{1,1}}{\sqrt2},
 \qquad
 \ket{\Phi_2}=\frac{\ket{0,2}+\ket{1,3}}{\sqrt2}.
 \label{eq:2x4-vectors-supp}
\end{equation}
The Schmidt supports of the negative eigenvectors of $\rho^{\Gamma_B}$ are
orthogonal and locally
resolvable, but
\begin{equation}
 \Tr\rho^2=\frac12,
 \qquad
 \Tr\rho_A^2=\frac12,
 \qquad
 \Tr\rho_B^2=\frac14,
 \label{eq:2x4-purities-supp}
\end{equation}
so
\begin{equation}
 t_F=\frac{3}{16}>0.
 \label{eq:2x4-tF-supp}
\end{equation}
The quantity $t_F$ is therefore neither a support-overlap measure nor a
negativity loss under the local resolution protocol.

\section{Remaining compatibility problem}
\label{supp:compatibility}

The single-eigenvalue compression theorem is exact, but it does not determine when
several compressions satisfying the two-qubit inequalities arise from one density operator.  For a chosen negative eigenbasis, one may record the pairwise
support geometry through
\begin{equation}
 g_{jk}=\frac12\Tr(P_jP_k),
 \qquad 0\le g_{jk}\le1.
 \label{eq:overlap-diagnostic-supp}
\end{equation}
For rank-two projectors, $g_{jk}=0$ is equivalent to orthogonal supports.
The collection $\{g_{jk}\}$ is not a complete invariant, especially inside a
degenerate negative eigenspace where the eigenbasis is not unique.

The open qubit--qudit inverse problem has two complementary aspects.  Spectrally,
one must decide when several four-point compression spectra, each satisfying
$q_j\le b_j$ and $q_jb_j\le a_jc_j$, can arise simultaneously from one density operator $\rho$.  Operationally, one must optimize a common local instrument
when the corresponding Schmidt supports overlap.  A complete solution must
connect these local constraints with the global quantities $t_F$ and
$\delta_F$ and with the joint geometry of the negative eigenspaces.

\end{document}